\documentclass[final, conference, twocolumn]{IEEEtran}
%\pdfoutput=1
\usepackage{amsthm,amsmath,etex,amssymb,latexsym,comment,etoolbox,cancel,pdfpages,graphicx,tikz,pgfplots,amsbsy,multicol,color,xcolor,adforn,enumerate,framed,relsize}
\usepackage{mathtools}
\mathtoolsset{showonlyrefs}
\usepackage[T1]{fontenc}
\usepackage{mathrsfs}
\usepackage{ifthen}
\usepackage{qtxmath}
\usepackage{bbm}
%\usepackage{microtype}
%\linespread{0.95}
\usepackage[noadjust]{cite}
\usepackage[all,cmtip,rotate]{xy}
\usepackage[framemethod=tikz]{mdframed}
\newboolean{islongversion}   
\setboolean{islongversion}{false}   
\usetikzlibrary{positioning,chains,fit,shapes,calc,patterns}
\definecolor{myblue}{RGB}{20,100,160}
\definecolor{mygreen}{RGB}{80,160,80}
\definecolor{myorange}{RGB}{130,80,50}
\usepackage[T1]{fontenc}
\newcommand{\szero}{\emptyset}
\newboolean{arxiv}
\setboolean{arxiv}{true}
\ifthenelse{\boolean{arxiv}}{\def\app{appendix}}{\def\app{extended manuscript~\cite{BakK:23a}}}

\ifthenelse{\boolean{arxiv}}{\def\applem#1{Lemma~\ref{#1}}}{\def\applem#1{\cite[Lemma~\ref{#1}]{BakK:23a}}}
\ifthenelse{\boolean{arxiv}}{\def\appeq#1{Eq.~\ref{#1}}}{\def\appeq#1{\cite[Eq.~\ref{#1}]{BakK:23a}}}
\ifthenelse{\boolean{arxiv}}{}{}
\ifthenelse{\boolean{arxiv}}{\def\appdef#1{Definition~\ref{#1}}}{\def\appdef#1{\cite[Definition~\ref{#1}]{BakK:23a}}}
\ifthenelse{\boolean{arxiv}}{
\usepackage{setspace}
\onehalfspacing\onecolumn\newcommand\figscale{2.5}}{\newcommand\figscale{1}}

%\ifthenelse{\boolean{arxiv}}{\newcommand{applem}[1]{Lemma~\ref{#1}}}{\newcommand{applem}[1]{\cite[Lemma~\ref{#1}]{BakK:23a}}}
%\ifthenelse{\boolean{arxiv}}{\newcommand{appcor}[1]{Corollary~\ref{#1}}{\newcommand{appcor}[1]{\cite[Corollary~\ref{#1}]{BakK:23a}}
%\ifthenelse{\boolean{arxiv}}{\newcommand{extrefeq}[1]{Eq~\eqref{#1}}{\newcommand{extrefeq}[1]{\cite[Eq~\eqref{#1}]{BakK:23a}}
%\ifthenelse{\boolean{islongversion}}
%{\usepackage[unicode=true, bookmarks=true,bookmarksnumbered=true,bookmarksopen=true,bookmarksopenlevel=2, breaklinks=false,pdfborder={0 0 0},backref=false,colorlinks=false] {hyperref}
%\hypersetup{pdftitle={Plausible Deniability}, pdfauthor={Mayank Bakshi and Vinod Prabhakaran}, pdfpagelayout=TwoColumn, pdfnewwindow=true, pdfstartview=XYZ, plainpages=false}}{}
\usepackage{url}

%\usepackage{xpatch}
%\bibliography{references.bib}
%\addbibresource{references.bib}    

%\let\maketitle\relax
\allowbreak
\newboolean{isdraft}   
%\setboolean{isdraft}{true}   
%\ifthenelse{\boolean{islongversion}}{\usepackage[bottom,multiple]{footmisc}}{\usepackage[font=small, labelsep=period, belowskip=-17pt,aboveskip=-3pt]{caption}\setlength{\intextsep}{5pt plus 0pt minus 2pt}\usepackage[bottom,flushmargin,multiple]{footmisc}\makeatletter\def\footnoterule{\relax\kern-5pt\hbox to \columnwidth{\hfill\vrule width \linewidth height 0.2pt\hfill}\kern4.2pt}\makeatother}
%\onecolumn
\newlength{\figurewidth}
% If it is two column, we set  \figurewidth to column width else 0.618*column width
\makeatletter
\if@twocolumn\setlength{\figurewidth}{\columnwidth}
\else\setlength{\figurewidth}{0.618\columnwidth}
\fi
\makeatother

%\interfoonotepenalty=10000

% If it is two column, break an array of equations at \eqbr. The optional argument allows adding latex code right after the break. In single column, this code does nothing
\makeatletter
\if@twocolumn\newcommand{\eqbr}[1][]{\\&\  \ifx\relax#1\relax\else #1\fi}
\else\newcommand{\eqbr}[1][]{}
\fi
\makeatother

% If it is two column, we add curly braces at an appropriate point in the part before break
\makeatletter
\if@twocolumn
\else
\fi
\makeatother

% If it is two column, we add curly braces at an appropriate point in the part after break
\makeatletter
\if@twocolumn
\else
\fi
\makeatother

% If it is two column, break an array of equations by making the first \lefteqn
\makeatletter
\if@twocolumn
\else
\fi
\makeatother

% draft mode: show both long and short texts. non-draft mode: show only text marked for the appropriate version

\ifthenelse{\boolean{isdraft}}
	{
		\newcommand{\longonly}[1]{\ifmmode #1\else {\color{red}#1}\fi}
		\newcommand{\shortonly}[1]{\ifmmode #1\else {\color{blue}#1}\fi}
		\newcommand{\shortlong}[2]{\ifmmode #1 #2\else {\color{red}#1} {\color{blue}{#2}}\fi}
	}
	{
		\ifthenelse{\boolean{islongversion}}
			{
				\newcommand{\longonly}[1]{#1}
				\newcommand{\shortonly}[1]{}
				\newcommand{\shortlong}[2]{#2}
			}
			{
				\newcommand{\longonly}[1]{}
				\newcommand{\shortonly}[1]{#1}
				\newcommand{\shortlong}[2]{#1}
			}
	}

\allowdisplaybreaks

\newcommand{\suppress}[1]{}
%\IEEEoverridfecommandlockouts

\theoremstyle{definition}

\providecommand{\corollaryname}{Corollary}
\providecommand{\theoremname}{Theorem}
\newtheorem{definition}{Definition}

\newtheorem{example}{Example}
%\AtEndEnvironment{example}{\tqed\\}

\theoremstyle{plain}
\newtheorem{theorem}{Theorem}
\newtheorem{corollary}{Corollary}
\newtheorem{lemma}{Lemma}

\theoremstyle{remark}
\newtheorem{remark}{Remark}

\DeclareSymbolFont{lettersA}{U}{txmia}{m}{it}
 \DeclareMathSymbol{\bbr}{\mathbb}{lettersA}{"92}
 \DeclareMathSymbol{\bbc}{\mathord}{lettersA}{"83}
 \DeclareMathSymbol{\Exp}{\mathord}{lettersA}{"85}
 \DeclareMathSymbol{\ent}{\mathord}{lettersA}{"88}
 \DeclareMathSymbol{\MI}{\mathord}{lettersA}{"89}
 \DeclareMathSymbol{\KL}{\mathord}{lettersA}{"84}
\DeclareMathSymbol{\EL}{\mathord}{lettersA}{"8C}
\DeclareMathSymbol{\EM}{\mathord}{lettersA}{"8D}

  \DeclareMathSymbol{\bbf}{\mathord}{lettersA}{"86}
   \DeclareMathSymbol{\bbn}{\mathbb}{lettersA}{"8E}
 \DeclareMathSymbol{\bbg}{\mathord}{lettersA}{"87}
 
\makeatletter
\renewenvironment{proof}[1][\proofname]{\par
  \pushQED{\qed}%
  \normalfont \topsep6\p@\@plus6\p@\relax
  \trivlist
  \item[\hskip\labelsep
        \itshape
    #1\@addpunct{:}]\mbox{}\\*
}{%
  \popQED\endtrivlist\@endpefalse
}
\makeatother

\newcommand{\cA}{\mathscr{A}}
\newcommand{\cB}{\mathscr{B}}

\newcommand{\cP}{\mathscr{P}}

\newcommand{\cS}{\mathscr{S}}

\newcommand{\cX}{\mathscr{X}}
\newcommand{\cY}{\mathscr{Y}}
\newcommand{\cZ}{\mathscr{Z}}

\newcommand{\Prob}{P}
\newcommand{\Probc}{Q}

\newcommand{\wy}{\ensuremath{{{W}}_{Y|X,S}}}
\newcommand{\uz}{\ensuremath{{U}_{Z|X}}}

\newcommand{\authenc}{P_{X^n|M}}
\newcommand{\authdec}{\phi}
\newcommand{\accept}{\mathsf{ACCEPT}}
\newcommand{\reject}{\mathsf{REJECT}}
\newcommand{\fP}{\Gamma}
\newcommand{\distribution}{\uz-distribution }

\newcommand{\cst}{C_{\textrm{stoch,auth}}}

\newcommand{\set}[1]{\left\{#1\right\}}
\newcommand{\expect}[2]{{\normalfont\textsf{\bf E}}_{#1}\left[#2\right]}
\usepackage{authblk}
\newcommand{\Ctag}{C_{\mathrm{tag}}}

\newcommand{\Perr}{P_{\mathrm{e}}}

%\newcommand{\Pr}{\operatornamewithlimits{Pr}}

%\pagestyle{plain}
%\fancyhead{}
%\renewcommand{\headrulewidth}{0pt}
\IEEEoverridecommandlockouts \title{On Authentication against a Myopic Adversary using Stochastic Codes\thanks{This material is based upon work supported by the National Science Foundation under Grant No. CCF-1908725 and CCF-2107526.
 Author emails: mayank.bakshi@ieee.org, okosut@asu.edu.}}
\author{Mayank Bakshi}
\author{Oliver Kosut}
\affil{Arizona State University}

\begin{document}
\bstctlcite{IEEEexample:BSTcontrol}
\maketitle
\begin{abstract} We consider the problem of authenticated communication over a discrete arbitrarily varying channel where the legitimate parties are unaware of whether or not an adversary is present. When there is no adversary, the channel state always takes a default value $\szero$. When the adversary is present, they may choose the channel state sequence based on a non-causal noisy view of the transmitted codewords and the encoding and decoding scheme. We require that the  decoder output the correct message with a high probability when there is no adversary, and either output the correct message or reject the transmission when the adversary is present. Further, we allow the transmitter to employ private randomness during encoding that is known neither to the receiver nor the adversary. Our first result proves a dichotomy property for the capacity for this problem -- the capacity either equals zero or it equals the non-adversarial capacity of the channel. Next, we give a sufficient condition for the capacity for this problem to be positive even when  the non-adversarial channel to the receiver is stochastically degraded with respect to the channel to the adversary. Our proofs rely on a connection to a standalone authentication problem, where the goal is to accept or reject a candidate message that is already available to the decoder. Finally, we give examples and compare our sufficient condition with other related conditions known in the literature.\end{abstract}
\section{Introduction}

Consider the problem of communication over a channel where an adversary may or may not be present.  Neither the transmitter nor the receiver know {\em a priori} whether or not the adversary is present. The goal for the transmission is to decode the message with an authentication guarantee. In particular, when the adversary is not present, it is desirable that the message is decoded correctly with a high probability. On the other hand, when the adversary is present, the decoding goal is relaxed -- the decoder may either output the correct message, or it may declare that the adversary is present.

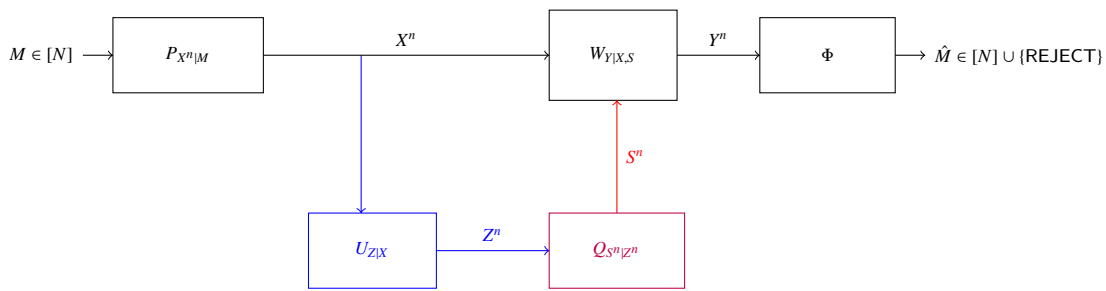
\begin{figure}
 	\centering
	\begin{tikzpicture}[scale=\figscale*0.4]
	%\draw[white] (9.2,3.5) rectangle ++(2.5,1.2) ;
	\draw[color=white] (-0.2-2.6,5.4) rectangle ++(2,1) node[pos=.5]{\scriptsize Transmitter};
	\draw (-0.2-2.6,6.5) rectangle ++(2,1) node[pos=.5]{\scriptsize $\Prob_{X^n|M}$} ;
	\draw[<-,color=blue] (0.5, 4.9) -- ++ (0, 2.1); 
	\draw[color=blue] (0.5-0.9+0.2,3.9) rectangle ++(2.5-0.2-0.6,1) node[pos=.5]{\scriptsize\color{blue} $\uz$};
	\draw[->,color=blue] (1.5, 4.4) -- node[above]{\scriptsize \color{blue} $Z^n$} ++(1.5,0);
	
	\draw (3.2-0.6+0.4,6.4) rectangle ++(2.5-0.8,1.2) node[pos=.5]{\scriptsize $\wy$};
	\draw[->,color=red] (3.9, 4.9) -- node[right]{\scriptsize \color{red} $S^n$} ++ (0, 1.5) ;
	\draw[color=purple] (2.6+0.4, 3.9) rectangle ++(2.6-0.8,1) node[pos=0.5]{\scriptsize\color{purple} $\Probc_{S^n|Z^n}$};
	\draw[color=white] (2.6+0.4, 3.9-1.1) rectangle ++(2.6-0.8,1) node[pos=0.5]{\scriptsize Adversary};
	\draw[->] (-3.2,6.5+0.5) node[anchor=east]{\scriptsize $M\in[N]$} -- ++ (0.4,0) ;
	\draw[->] (2-0.2-2.6, 6.5+0.5) -- node[above]{\scriptsize $X^n$} ++(1.4+2+0.4,0);
	\draw[->] (5.1-0.4,6.5+0.5) -- node[above] {\scriptsize $Y^n$} ++ (1.7+0.4-1,0);
	\draw (6.8-1, 6+0.5) rectangle ++(2.6-0.8,1) node[pos=0.5]{\scriptsize $\Phi$};
	\draw[color=white] (6.8-1, 6+0.5-1.1) rectangle ++(2.6-0.8,1) node[pos=0.5]{\scriptsize Receiver};
	\draw[->] (9.4-0.8-1, 6.5+0.5) -- ++ (0.7-0.3,0)  node[right] {\scriptsize $\hat{M}\in[N]\cup\set{\reject}$};
	\node[align=center, color = red] at (7,3) {{\footnotesize }};
\end{tikzpicture}
\caption{The authentication problem over a myopic arbitrarily varying channel.
}\label{fig:authcomm}
\end{figure}

We study this problem in the setting of \emph{myopic arbitrarily varying channels (myopic AVCs)} (see Fig~\ref{fig:authcomm}). The channel between the transmitter and the receiver is an Arbitrarily Varying Channel (AVC) $\wy$. When the adversary is absent, the channel state assumes a default ``no-adversary'' state $\szero$. On the other hand, when the adversary is present, they may choose the state sequence arbitrarily over the entire transmission. The adversary is \emph{myopic}, \emph{i.e.}, they have a non-causal view of the codeword passed through a memoryless channel \uz\ \underline{before} they select the channel state sequence. The adversary's observation is conditionally independent of the receiver's observation given the transmitted codeword and the channel state sequence.

\subsection{Related work}

The reliable communication problem over AVCs has a rich history~\cite{Blachman62,Ahlswede78, CsiszarN88}. Myopic AVCs have been studied in~\cite{Sarwate:10,DeyJL:15,ZhangVJS:18,BudkuleyDJLSW:20}.
The problem of authentication has been studied in several different frameworks. There is a long line of work that examines the message authentication problem -- both as a standalone problem~\cite{SimmonsCRYPTO84, MaurerIT00} as well as over noisy channels~\cite{LaiEPIT09,TuLIT18}. In recent years, considerable attention has focused on keyless authentication over adversarial channels, which is the setting of this paper~\cite{KosutKITW18,GravesYS:16,GravesBKKY:23,SangwanBDP:19ISIT,BeemerGKKY:20a}. Authentication over myopic AVCs has previously been studied in~\cite{BeemerKKGY:19,BeemerGKKY:20}.

\subsection{Our contribution}
In the following, we summarize our main results. The proofs of these results are detailed in later sections after formally describing the problem and notation in Section~\ref{sec:setup}. 

In this paper, we consider the capacity $\cst$ for this problem when the encoders are allowed to be \emph{stochastic} of the form $P_{X^n|M}$, \emph{i.e.}, the transmitter may employ private randomness while encoding. This randomness is neither available to the receiver nor to the adversary. 

\subsubsection{Connection to Authentication Tags}
\begin{figure}
	
 	\centering
	\begin{tikzpicture}[scale=\figscale*0.4]
	\draw[white] (-0.2,3.5) rectangle ++(3.4,3.7) ;
	\draw (-0.2-2.6,6.5) rectangle ++(2,1) node[pos=.5]{\scriptsize $\Prob_{X^n|M}$};
	\draw[<-,color=blue] (0.5, 4.9) -- ++ (0, 2.1); 
	\draw[color=blue] (0.5-0.9+0.2,3.9) rectangle ++(2.5-0.2-0.6,1) node[pos=.5]{\scriptsize\color{blue} $\uz$};
	\draw[->,color=blue] (1.5, 4.4) -- node[above]{\scriptsize \color{blue} $Z^n$} ++(1.5,0);
	
	\draw (3.2-0.6+0.4,6.4) rectangle ++(2.5-0.8,1.2) node[pos=.5]{\scriptsize $\wy$};
	\draw[->,color=red] (3.9, 4.9) -- node[right]{\scriptsize \color{red} $S^n$} ++ (0, 1.5) ;
	\draw[color=purple] (2.6+0.4, 3.9) rectangle ++(2.6-0.8,1) node[pos=0.5]{\scriptsize\color{purple} $\Probc_{S^n|Z^n}$};
	\draw[->] (-3.2,6.5+0.5) node[anchor=east]{\scriptsize $M\in[N]$} -- ++ (0.4,0) ;
	\draw[->] (2-0.2-2.6, 6.5+0.5) -- node[above]{\scriptsize $X^n$} ++(1.4+2+0.4,0);
	\draw[->] (5.1-0.4,6.5+0.5) -- node[above] {\scriptsize $Y^n$} ++ (1.7+0.4-1,0);
	\draw (6.8-1, 6+0.5) rectangle ++(2.6-0.8,1) node[pos=0.5]{\scriptsize $\phi_{\hat{m}}$};
	\draw[->] (9.4-0.8-1, 6.5+0.5) -- ++ (0.7-0.3,0)  node[right] {\scriptsize $A\in\set{\accept,\reject}$};
	%\node[align=center, color = red] at (7,3.9) {{\footnotesize }};
\end{tikzpicture}
\caption{In the authentication tag problem, the objective is to authenticate a candidate message $\hat{m}$ that is already available to the decoder. When no adversary is present, it is desirable that the decoder outputs a $\reject$ if and only if the candidate message $\hat{m}$ is not equal to the true message $m$. When there is an adversary present, the decoder may output a $\reject$ to either denote that the $\hat{m}$ is not equal to $m$ or to indicate that an adversary has been detected.
}\label{fig:authtag}
\end{figure}
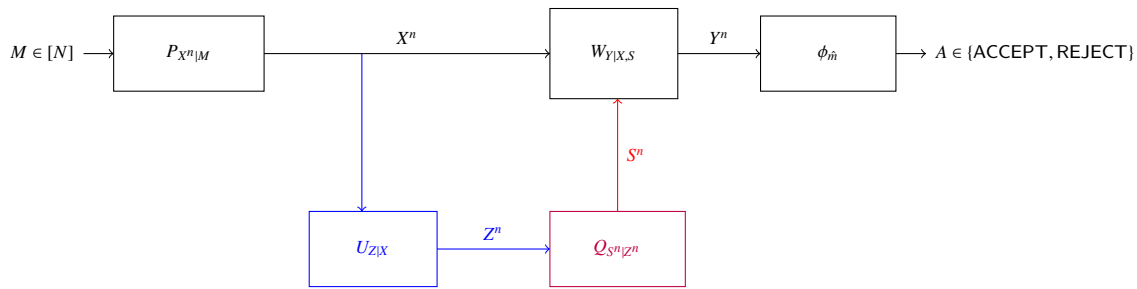

Consider the authentication tag problem shown in Figure~\ref{fig:authtag}.  This problem is reminiscent of the problem of \emph{identification via channels}~\cite{AhlswedeD:89}, and in a similar vein, supports the number of messages to be doubly exponential in the blocklength. Our first result draws a connection between the authentication problem and the {\em authentication tag} problem. We show that authentication tag capacity $C_{\mathrm{tag}}$ equals $\cst$ for all myopic AVCs.  Let $C_{\szero}$ denote the channel capacity (\emph{i.e.}, the non-adversarial capacity) of the channel $\mathsf{W}^{(\szero)}_{Y|X}(\cdot|\cdot)\triangleq \mathsf{W}_{Y|X,S}(\cdot|\cdot,\szero)$.  
\begin{theorem}\label{thm:dichotomy} $C_{\textrm{tag}}= C_{\textrm{stoch,auth}}$. Further, whenever these are positive, they equal $C_{\szero}$.
\end{theorem}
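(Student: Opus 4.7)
I would split the theorem into two independent assertions: the equivalence $\cst = \Ctag$ and the dichotomy property that when either is positive it equals $C_{\szero}$. The upper bound $\cst \leq C_{\szero}$ is immediate because any authentication scheme is, a fortiori, a no-adversary code, so the substance of the dichotomy reduces to showing that a positive authentication rate can be amplified to $C_{\szero}$.

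For the equivalence, the direction $\cst \leq \Ctag$ is the easier one: given an authentication scheme $(P_{X^n|M}, \Phi)$ of rate $R$, I would reuse the encoder and define the tag decoder by $\phi_{\hat m}(Y^n) = \accept$ iff $\Phi(Y^n) = \hat m$. Both the no-adversary consistency and the resistance against the myopic adversary transfer directly from the authentication guarantees on $\Phi$. The opposite direction $\cst \geq \Ctag$ calls for a converse reduction, and here I would use an identification-code-style construction: if the tag scheme admits $N_{\mathrm{tag}}$ candidate messages with small per-candidate error, I assign each authentication message $m \in [N_{\mathrm{auth}}]$ a random ``representative set'' $S_m$ of tag inputs and encode $m$ by selecting a uniform sample from $S_m$. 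The authentication decoder runs the tag decoder against all $N_{\mathrm{auth}}$ candidates and outputs the unique $\hat m$ whose set contains the accepted index (or $\reject$). A union bound over the $N_{\mathrm{auth}}$ candidates, combined with standard amplification that drives the per-candidate tag error exponentially small at arbitrarily small rate loss, supports any rate $R' < \Ctag$.

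For the dichotomy, I would use the tag formulation to bootstrap positive rate into $C_{\szero}$. Assuming $\Ctag > 0$, split a blocklength-$n$ transmission into a sub-block of length $n_1 = o(n)$ carrying a tag code and a bulk sub-block of length $n_2 = n - n_1$ carrying the message via a capacity-achieving no-adversary channel code of rate $C_{\szero} - \delta$. The joint decoder first produces $\tilde m$ from the bulk block and then runs the tag decoder on $(\tilde m, Y_1^{n_1})$, declaring $\reject$ whenever the tag decoder rejects. Without an adversary, both components succeed. Against the myopic adversary, a wrong $\tilde m \neq m$ is rejected by the tag scheme with high probability, while a correct $\tilde m = m$ yields correct authentication. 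The quantitative ingredient that makes this work is the tag problem's identification-like doubly-exponential scaling: $n_1 = O(\log n)$ suffices to enumerate all $2^{n_2(C_{\szero} - \delta)}$ candidate messages, so the overhead is asymptotically vanishing and the achievable authentication rate approaches $C_{\szero}$.

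The main obstacle I expect is the $\cst \geq \Ctag$ direction, which demands a clean hash/identification-style construction respecting both the myopic observation $Z^n$ and the stochastic-encoder structure; in particular, one must control the probability that distinct representative sets spuriously share the tag output. A secondary subtlety in the dichotomy step is that the adversary's $Z^n$ spans both sub-blocks, so $S^n$ is chosen jointly, and the analysis must guarantee that tampering with the bulk block is caught by the tag block even under a jointly optimized adversarial strategy—precisely the place where the myopic-channel $\uz$ plays an essential role.
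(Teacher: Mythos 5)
Your two-phase dichotomy argument (bulk no-adversary channel code plus an $O(\log n)$ tag block, with the observation that the adversary's jointly chosen $S^n$ based on all of $Z^n$ must still be reduced to an attack on the tag block alone) is exactly the paper's route for ``$\Ctag>0\Rightarrow\cst=C_{\szero}$,'' and it is sound. The equivalence $\Ctag=\cst$, however, has two genuine gaps. First, your ``easy'' direction fails on the rate accounting: tag capacity is defined via $\frac{1}{n}\log_2\log_2 N_n$, so reusing an authentication code with $N=2^{nR}$ messages and setting $\phi_{\hat m}(Y^n)=\accept$ iff $\Phi(Y^n)=\hat m$ yields a tag whose rate is $\frac{1}{n}\log_2(nR)\to 0$. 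This only shows $\Ctag\ge 0$. The paper's Lemma~\ref{lem:ctaggreater} closes this gap by first building an identification code for the noiseless channel (supporting $\approx 2^{2^{\tilde n\tilde R}}$ identities on $\tilde n$ bits) and then shipping the $\tilde n$-bit identification codeword through the authentication code; it is this composition that produces the doubly exponential number of verifiable messages needed for $\Ctag\ge\cst$.

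Second, your reduction $\cst\ge\Ctag$ via representative sets and exhaustive verification does not work and is not needed. A tag only verifies a supplied candidate; to decode by running the tag decoder over many candidates you would need the per-candidate acceptance probability $\lambda_2$ to beat a union bound over up to $2^{2^{nR}}$ tag messages, and the definition only guarantees a constant $\lambda_2$ (even exponentially small amplification cannot overcome a doubly exponential candidate count, and amplification costs blocklength). This is the same obstruction that prevents converting identification codes into transmission codes. The paper avoids this direction entirely: it combines (i) $\Ctag\ge\cst$ (Lemma~\ref{lem:ctaggreater}), (ii) $\Ctag>0\Rightarrow\cst=C_{\szero}$ (Lemma~\ref{lem:dichotomy}), and (iii) the upper bound $\Ctag\le C_{\szero}$, obtained because any tag is in particular an identification code for $\mathsf{W}^{(\szero)}$ and the identification capacity of a DMC equals its Shannon capacity. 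Together these give $\cst\le\Ctag\le C_{\szero}=\cst$ when $\Ctag>0$, and $\cst\le\Ctag=0$ otherwise. You state $\cst\le C_{\szero}$ but never invoke the identification-capacity bound on $\Ctag$, which is the missing piece that would let you discard your problematic second reduction.
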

While this result has been previously noted in settings where the adversary is oblivious of the transmission (\emph{c.f.},~\cite{SangwanBDP:19ISIT}), to the best of our knowledge, this is the first extension of this property to myopic adversaries. Thus, Theorem~\ref{thm:dichotomy} shows that, from a capacity viewpoint, it is sufficient to examine authentication tags. \subsubsection{Overwritability condition} Theorem~\ref{thm:dichotomy} alludes to a dichotomy property for the authentication problem with myopic adversaries. Such dichotomies are well known in the AVC literature for both the reliable communication problem as well as the authentication problem. In the authentication setting, this dichotomy is often characterized via an appropriate overwritability criterion that specifies the condition under which the adversary can confuse the receiver between legitimate (non-adversarial) transmission a symbol $x'$ at the channel input and an adversarially influenced transmission. 

Coming to authentication over myopic AVCs,~\cite{BeemerKKGY:19} introduces the notion of $\uz$-overwritability (also see ~\appdef{def:uow}). As noted in~\cite{BeemerKKGY:19}, whenever a channel $\wy$ is \uz-overwritable, the authentication capacity equals zero. Further, if encoding is restricted to deterministic codes, the authentication capacity is zero whenever the channel $\wy$ is stochastically degraded with respect to \uz, \emph{i.e.}, there exists a channel $\mathsf{V}_{Y|Z}$ such that $\wy (y|x,\szero)=\sum_{z\in\cZ}\uz (z|x)\mathsf{V}_{Y|Z}(y|z) \forall y\in\cY,x\in\cX,$ 
and $\wy$ satisfies the $I$-overwritability condition (see~\appdef{def:iow}). Going beyond deterministic codes,~\cite{BeemerGKKY:20} gives an example to show that the authentication capacity may be non-zero even if the channel is $I$-overwritable and $\wy$ is stochastically degraded with respect to $\uz$. 

Our next result is motivated by this example to give a general sufficient condition for positivity of authentication capacity. We present a new overwritability condition that we call $\uz$-distribution overwritability. Intuitively, this condition requires that the myopic adversary be able to overwrite the channel output to mimic legitimate transmission of any symbol $x'$ of their choice when the true input to the channel is drawn from any distribution that leads to a publicly known distribution $P_Z$ for the adversary (see Definition~\ref{def:udistow} for a formal statement). 
\begin{theorem} $\cst>0$ if $\wy$ is not \distribution overwritable.	\label{thm:achievability}
\end{theorem}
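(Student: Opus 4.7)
The approach is to invoke Theorem~\ref{thm:dichotomy}, which reduces the claim to showing that the authentication tag capacity $C_{\mathrm{tag}}$ is positive, and then to build a stochastic-encoding tag scheme driven by the hypothesis. First I would extract from the failure of $\uz$-distribution overwritability a witness pair $(P_X^*, x'^*)$ and, by compactness of the set of memoryless adversary channels $\Probc_{S|Z}$, a positive gap $\delta$ such that the single-letter output distribution $\sum_{x,z,s} P_X^*(x)\uz(z|x)\Probc(s|z)\wy(\cdot|x,s)$ stays at least $\delta$ away in total variation from $\wy(\cdot|x'^*,\szero)$ for every choice of $\Probc_{S|Z}$. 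If $x'^*$ lies outside $\mathrm{supp}(P_X^*)$, I would replace $P_X^*$ by the mixture $P_X^\sharp = (1-\eta)P_X^* + \eta\delta_{x'^*}$ and use continuity of the infimum over $\Probc_{S|Z}$ to preserve a (possibly smaller) positive gap $\delta^\sharp$ for sufficiently small $\eta$.

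Next I would use a random-coding construction with stochastic encoding: for $(m,t) \in [N] \times [K]$ with $N = 2^{nR}$ and $K = 2^{nR_K}$, sample codewords $X^n(m,t)$ with i.i.d.\ coordinates from $P_X^\sharp$. The stochastic encoder for message $m$ draws $t$ uniformly from $[K]$ and transmits $X^n(m,t)$; the decoder, given $(Y^n,\hat m)$, accepts iff some $t \in [K]$ makes $(X^n(\hat m,t),Y^n)$ jointly typical under the law $P_X^\sharp \cdot \wy(\cdot|\cdot,\szero)$. With $R_K$ chosen slightly below $I(X;Y)$ under this joint law, the packing lemma handles the no-adversary error event when $m \neq \hat m$, while the no-adversary case at $m = \hat m$ succeeds via the true $t$.

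The crux is the adversarial event for $m \neq \hat m$. Since $m \neq \hat m$, the target codeword $X^n(\hat m,t)$ is independent of the transmitted $X^n(m,t_m)$ and hence of $(Z^n,S^n,Y^n)$. Restricting attention to the $\approx P_X^\sharp(x'^*)n$ positions at which $X_i(\hat m,t) = x'^*$, the true input $X_i^{\mathrm{true}}$ on those positions is still marginally distributed as $P_X^\sharp$ while $S_i$ is a function of $Z^n$ only. A convexity/time-averaging argument replaces the arbitrary attack $\Probc_{S^n|Z^n}$ by an effective memoryless channel $\bar\Probc_{S|Z}$ for the purpose of computing the expected conditional empirical $Y$-distribution on these positions, and the non-overwritability gap forces this expectation to sit at least $\delta^\sharp$ away in total variation from $\wy(\cdot|x'^*,\szero)$. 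McDiarmid's inequality concentrates the empirical quantities around their expectations at an exponential rate, a union bound over $t \in [K]$ controls the search over tags, and standard random-coding expurgation produces a deterministic codebook meeting all requirements. Any positive rate $R$ suffices, giving $C_{\mathrm{tag}} > 0$ and, by Theorem~\ref{thm:dichotomy}, $\cst = C_\szero > 0$. The hardest step is the convexity/time-averaging reduction, since an arbitrary adversary may create nontrivial temporal correlations; the independence of $X^n(\hat m,t)$ from the attack channel is what keeps that reduction clean.
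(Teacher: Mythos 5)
There is a genuine gap at the very first step, and it is the step that carries the whole theorem. The negation of $\uz$-distribution overwritability (Definition~\ref{def:udistow}) is a statement about an equivalence class $\cP\in\fP_{\mathrm{ind}}$ of myopically indistinguishable input distributions: there exist $\cP$ and $x'$ such that \emph{for every} strategy $\Probc_{S|Z}$ \emph{there exists some} $\Prob_X\in\cP$ whose induced output distribution is bounded away from $\wy(\cdot|x',\szero)$. You instead extract a single $P_X^*$ whose gap is uniform over all $\Probc_{S|Z}$; that is the negation of $(\uz,\cX)$-distribution overwritability (Definition~\ref{def:upow}), a strictly stronger hypothesis, and compactness does not license swapping ``$\forall\Probc\,\exists\Prob_X$'' into ``$\exists\Prob_X\,\forall\Probc$.'' The second example in Section~\ref{sec:examples} (the $u=1$ case, where the adversary observes nothing and all input distributions lie in a single class) is a channel covered by the theorem for which no such $P_X^*$ exists: for each fixed input law the adversary, knowing it, overwrites exactly. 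Since your entire construction hinges on $P_X^\sharp$ being one fixed, publicly known i.i.d.\ input law, it collapses on exactly those channels that separate the theorem from Lemma~\ref{lem:uzpow}. The missing idea is the paper's second stage: take a finite $\delta$-net $\{P^{(1)},\dots,P^{(K)}\}$ of $\cP$, let the encoder privately and uniformly redraw $k\in[K]$ on each of $L$ sub-blocks and run a single-distribution tag with $P^{(k)}$ there, and accept only if every sub-block accepts. Because all members of $\cP$ induce the same distribution on $Z^n$, the adversary cannot adapt to $k$; whatever it plays, with probability at least $1/K$ per sub-block the realized $P^{(k)}$ defeats it, and letting $L$ and then $n$ grow drives the missed-detection probability to zero.

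What your proposal does contain is, in essence, an alternative route to Lemma~\ref{lem:uzpow} (the paper's Case 1). Your joint-typicality test restricted to the positions where the candidate codeword equals $x'$ plays the same role as the paper's overlapping sets $\mathfrak{B}$ and the empirical-distribution test of Definition~\ref{def:authtag}, and the ``convexity/time-averaging'' reduction of an arbitrary $\Probc_{S^n|Z^n}$ to an effective single-letter strategy is precisely what Lemma~\ref{lem:typicality} accomplishes via the method of types; you correctly identify this as the hardest step but do not supply it, and your union bound over exponentially many tags $t\in[K]$ needs the concentration exponent $\gamma$ to beat $R_K$, which is not checked. Even granting all of that, the argument establishes positivity only for the strictly smaller class of channels that are not $(\uz,\cX)$-distribution overwritable (see Figure~\ref{fig:hierarchy}), not the class claimed in the theorem.
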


On the way to proving positive rates for channels that are not $\uz$-distribution overwritable, we first prove achievability for a smaller class of channels that are not $(\uz,\cX)$-distribution overwritable (see~\appdef{def:upow}). Our characterizations of overwritability give rise a natural hierarchy among different notions of overwritability for myopic AVCs. Figure~\ref{fig:hierarchy} shows the inclusion relationships between the overwritability notions we touch upon in this paper. In Section~\ref{sec:examples}, we give examples to show that these inclusions are, in fact, strict. \ifthenelse{\boolean{arxiv}}{}{In order to keep the exposition here short, we present some of the technical details in the extended version of this manuscript~\cite{BakK:23a}.}
%\begin{definition}[$\uz$-overwritability~\cite{}] \label{def:uow} We say that $\wy$ is \uz-overwritable if, for every $x'\in\cX$, there exists an adversarial strategy $\Probc_{S|Z}$ such that for every $x\in\cX$ and $y\in\cY$,
%\[\expect{Z,S}{\wy(y|x,S)}=\wy(y|x',\szero),\]
%	where, the random variables $(Z,S)$ are distributed according to the joint distribution $\uz(\cdot|x)\Probc_{Z|S}(\cdot|\cdot)$.
%\end{definition}

%\begin{definition}[$I$-overwritability~\cite{}] \label{def:uow} We say that $\wy$ is $I$-overwritable if, for every $x,x'\in\cX$, there exists an adversarial strategy $\Probc_{S|Z}$ such that for every $y\in\cY$,
%\[\expect{Z,S}{\wy(y|x,S)}=\wy(y|x',\szero),\]
%	where, the random variables $(Z,S)$ are distributed according to the joint distribution $\uz(\cdot|x)\Probc_{Z|S}(\cdot|\cdot)$.
%\end{definition}

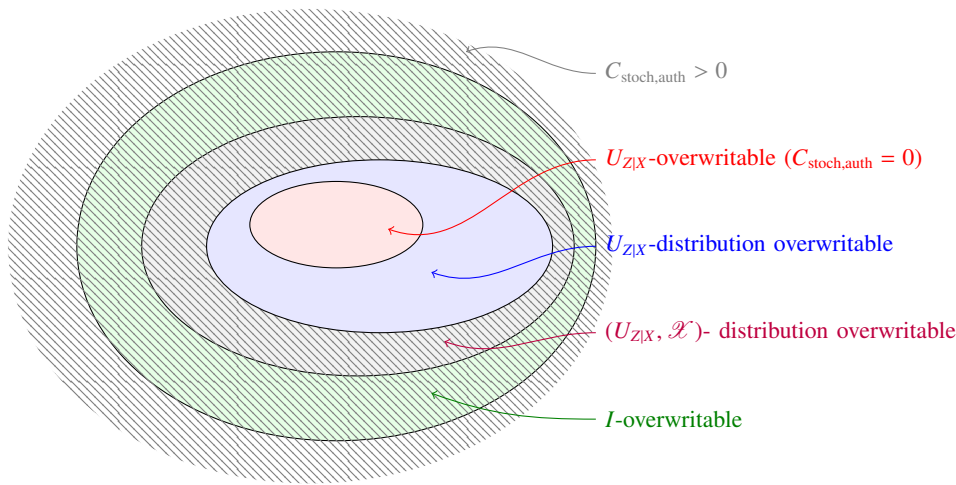
\begin{figure}
 	\centering
	\begin{tikzpicture}[scale=\figscale*0.23]
	\draw[fill=green!10!white] (0,0) ellipse (6 and 4.5);
	\draw[fill=gray!10!white] (.5,0) ellipse (5 and 3);
	\draw[color = white,thin,pattern=north west lines, pattern color=black!50!white] (-0.6,0) ellipse (7 and 5.5);
	
	\draw[fill=blue!10!white] (1,0) ellipse (4 and 2);
	
	\draw[fill=red!10!white] (0,.5) ellipse (2 and 1);
	\coordinate (achievein) at (3,4.5);
	\coordinate (achieveout) at (6,4);
	
	\coordinate (uowin) at (1.2,.4);
	\coordinate (uowout) at (6,2);
	\coordinate (udowin) at (2.2,-0.6);
	\coordinate (udowout) at (6,0);
	\coordinate (uxowin) at (1.5,.-1.2);
	\coordinate (uxowout) at (6,-2);
	\coordinate (iowin) at (2.2,-3.4);
	\coordinate (iowout) at (6,-4);
\draw[gray,<-]  (achievein) to[out=20,in=180] (achieveout)  node[right] {\small $\cst>0$};
	
\draw[red,<-]  (uowin) to[out=-20,in=180] (uowout)  node[right] {\small $\uz$-overwritable ($\cst=0$)};
\draw[blue,<-]  (udowin) to[out=-20,in=180] (udowout)  node[right] {\small $\uz$-distribution overwritable};
\draw[purple,<-]  (uxowin)+(1,-1) to[out=-20,in=180] (uxowout)  node[right] {\small $(\uz,\cX)$- distribution overwritable};
\draw[green!50!black,<-]  (iowin) to[out=-20,in=180] (iowout)  node[right] {\small $I$-overwritable};
%\draw  plot[fill=blue,smooth, tension=.7] coordinates {(-8.5,-3.5) (-5,4.5) (-3,5.5) (1.5,6) (7,3.5) (6,-4.5) (5,-6.5) (2.5,-4) (3,-5.5) (-3,-5) (-8.5,-3.5)};
        
	%\node[align=center, color = red] at (7,3) {{\footnotesize }};
\end{tikzpicture}
\caption{The hierarchy of overwritable channels is shown here. Theorem~\ref{thm:achievability} shows that positive rates of authentication are supported with stochastic codes by all channels that are in the shaded region.
}\label{fig:hierarchy}
\end{figure}

\section{Model and Notation}\label{sec:setup}
\paragraph*{Notation} We follow standard notation for information theoretic quantities (\emph{c.f.}~\cite{CK11}). We denote sets by calligraphic symbols ($\cX$,$\cY$ etc). $\mathcal{P}(\cA)$ denotes the set of all probability distributions defined on a set $\cA$. $\mathbb{V}(P,Q)$ denotes the variational distance between two probability distributions. All logarithms are to base $2$.
\paragraph{Channel} We consider \emph{myopic arbitrarily varying channels (AVCs)} that consist of pairs of channels $(\wy,\uz)$. The transmitter and the receiver are connected through the \emph{main channel} $\wy$ with input alphabet $\cX$, output alphabet $\cY$, and state set $\cS$. The state set contains a ``no-adversary'' state $\szero$, which is the default channel state when there is no adversary. The adversary, if present, is connected to the transmitter through the channel $\uz$ with input alphabet $\cX$ and output alphabet $\cZ$. As is standard in myopic AVCs, we assume that conditioned in the channel input $X$ and the channel state $S$, the main channel output $Y$ and the adversary's channel output $Z$ are independent, {\em i.e.}, the Markov chain $Y-(X,S)-Z$ holds. All channel alphabets and the state set are finite sets. 
\paragraph{Codes} We consider stochastic codes for two problems in this paper. The first problem we consider is that of communicating a message in an authenticated manner over the channel $(\wy,\uz)$ using \emph{authentication codes}.
\begin{definition}[$(N,n)$-Authentication Codes] An $(N,n)$-authentication code for a channel $(\wy,\uz)$ consists of a stochastic encoder $\authenc$ that maps messages $m\in[N]$ to codewords $X^n\in\cX^n\sim \Prob_{X^n|M=m}$ and a  deterministic decoder $\Phi:\cY^n\to[N]\cup\set{\reject}$ . 
\end{definition}

The second problem we consider is a standalone authentication problem, wherein, the goal is to authenticate a candidate message already available to the decoder by sending a tag over the channel $(\wy,\uz)$ using \emph{authentication tags}.

\begin{definition}[$(N,n)$-Authentication Tags] An $(N,n)$-authentication tag for a channel $(\wy,\uz)$ consists of a stochastic encoder $\authenc$ that maps messages $m\in[N]$ to codewords $X^n\in\cX^n\sim \Prob_{X^n|M=m}$ and a collection of deterministic decoders $\authdec_{\hat{m}}:\cY^n\to\set{\accept,\reject}$ for every $\hat{m}\in[N]$. 
\end{definition}

\paragraph{Adversarial strategies} The adversary (if present) first independently and non-causally observes $z^n$, the output of the channel $\uz$ when $x^n$ is the input. Next, the state sequence $s^n$ is selected based on the knowledge of the code and the observation $z^n$ using a strategy $Q_{S^n|Z^n}$.

\paragraph{Error Probabilities} The error probability for an authentication code is the larger of following two probabilities: \emph{(i)} the maximal  probability of  decoding to either $\reject$ or to a message other than the true message $m$ when there is no adversary, and \emph{(ii)} the probability of the decoding to neither the true message $m$ nor to $\reject$ when there is an adversary present. Note that  when there is an adversary, it is acceptable to output a $\reject$ instead of the true message $m$. 
\begin{definition}[Error Probability for Authentication Codes] We say that an $(N,n)$-authentication code $(\Prob_{X^n|M},\Phi)$ achieves error probability $\epsilon$ if 
\begin{itemize}
\item[\emph{A.}]\emph{No Adversary}\vspace{-0.5em}
\[\max_{m\in [N]}\Pr_{X^n,Y^n}(\Phi(Y^n)\neq m \vert m\mbox{ sent},S^n=\szero^n)\leq \epsilon,\ \mathrm{and}\]\vspace{-0.5em}
\item[\emph{B.}] \emph{Adversary Present}\vspace{-0.5em}
\[\max_{m\in [N]}\sup_{\Probc_{S^n|Z^n}}\Pr_{X^n,Y^n,Z^n,S^n}(\Phi(Y^n)\notin\set{m,\reject} m\vert m\mbox{ sent})\leq \epsilon.\]\vspace{-0.5em}
\end{itemize}
%\max_{m\in [N]}\Pr_{X^n,Y^n}(\authdec_m(Y^n)=\reject\vert m\mbox{ sent},s^n=\szero^n)\leq \lambda_1, \mbox{ and}\\
%&\max_{m\in [N]}\max_{\hat{m}\in[N]\setminus\set{m}}\sup_{\Probc_{s^n|Z^t}}\Pr_{X^n,Y^n,Z^t,s^n}(\authdec_{\hat{m}}(Y^n)=\accept\vert m\mbox{ sent})\leq \lambda_2.
%\end{align}
\end{definition}
There are two kinds of error probabilities for authentication tags: \emph{(i)} the maximal  probability of $\reject$ the true message $m$ when there is no adversary at all and \emph{(ii)} the maximal  probability of accepting a wrong candidate message when there is an adversary present. Note that when there is no adversary, we don't consider the event that a wrong candidate message is accepted in our definition of error probabilities.  
\begin{definition}[Error Probabilities for Authentication Tags] We say that an $(N,n)$-authentication tag $(\Prob_{X^n|M},\set{\authdec_m}_{m\in[N]})$ achieves  error probabilities $(\lambda_1,\lambda_2)$ if 
\begin{itemize}
\item[\emph{A.}]\emph{False Alarm:}\vspace{-0.5em}
\[\max_{m\in [N]}\Pr_{X^n,Y^n}(\authdec_m(Y^n)=\reject\vert m\mbox{ sent},S^n=\szero^n)\leq \lambda_1,\ \mathrm{and}\]\vspace{-0.5em}
\item[\emph{B.}] \emph{Missed Detection}\vspace{-0.5em}
\[\max_{\substack{m\in [N]\\ \hat{m}\in[N]\setminus\set{m}}}\sup_{\Probc_{S^n|Z^n}}\Pr_{X^n,Y^n,Z^n,S^n}(\authdec_{\hat{m}}(Y^n)=\accept\vert m\mbox{ sent})\leq \lambda_2.\]\vspace{-1em}
\end{itemize}
%\max_{m\in [N]}\Pr_{X^n,Y^n}(\authdec_m(Y^n)=\reject\vert m\mbox{ sent},s^n=\szero^n)\leq \lambda_1, \mbox{ and}\\
%&\max_{m\in [N]}\max_{\hat{m}\in[N]\setminus\set{m}}\sup_{\Probc_{s^n|Z^t}}\Pr_{X^n,Y^n,Z^t,s^n}(\authdec_{\hat{m}}(Y^n)=\accept\vert m\mbox{ sent})\leq \lambda_2.
%\end{align}
\end{definition}
%{\color{purple}Find a better word/pair of words for misauthencation}
\begin{remark} When the channel $\wy$ is non-adversarial, \emph{i.e.}, $\cS=\set{\szero}$ an authentication tag for $(\wy,\uz)$ is equivalent to be thought of as an identification code for the identification problem~\cite{AhlswedeD:89} for the channel $\wy$.
\end{remark}

\paragraph{Capacity} The authentication capacity for stochastic codes is defined as follows.
\begin{definition}[Authentication capacity] The authentication capacity $\cst$ for a channel $(\wy,\uz)$ with stochastic codes is defined to be the supremum over all $R$ such that given any  $\epsilon>0$, there is a sequence of $(N_n,n)$-authentication codes achieving error probability $\epsilon$, and $\liminf_{n\to\infty} \frac{1}{n}\log_2{N_n}\geq R$.
\end{definition}
Coming to authentication tags, it turns out that, similarly to identification codes~\cite{AhlswedeD:89}, the number of messages for authentication tags grow doubly exponentially in the blocklength.  \begin{definition}[Authentication tag capacity] The authentication tag capacity $C_{\textrm{tag}}$ for a channel $(\wy,\uz)$ is defined to be the supremum over all $R$ such that given any  $\lambda_1,\lambda_2\in (0,1)$, there is a sequence of $(N_n,n)$-authentication tags achieving error probabilities $(\lambda_1,\lambda_2)$, and $\liminf_{n\to\infty} \frac{1}{n}\log_2\log_2{N_n}\geq R$.
\end{definition}

\section{Equivalence between Authentication Code and Authentication Tag capacities}
In this section, we prove Theorem~\ref{thm:dichotomy} and show that the capacities for the authentication problem and the authentication tag problems are identical when stochastic codes are permitted. Further, whenever these capacities are positive, they equal the no-adversary capacity of the channel.% While this result has been previously noted in settings where the adversary is oblivious of the transmission (\emph{c.f.},~\cite{}) to the best of our knowledge, this is the first extension of this property to myopic adversaries. In the following, let $C_{\szero}$ denote the channel capacity (\emph{i.e.}, the non-adversarial capacity) of the channel $\mathsf{W}^{(\szero)}_{Y|X}(\cdot|\cdot)\triangleq \mathsf{W}_{Y|X,S}(\cdot|\cdot,\szero)$.  
%\begin{theorem}\label{thm:dichotomy} $C_{\textrm{tag}}= C_{\textrm{stoch,auth}}$. Further, whenever these are positive, they equal $C_{\szero}$.
%\end{theorem}
\subsection{Proof of Theorem~\ref{thm:dichotomy}} The proof of this theorem relies on Lemmas~\ref{lem:ctaggreater} and~\ref{lem:dichotomy} and by noting that $\Ctag$ is always upper bounded by the identification capacity of the channel $\mathsf{W}^{(\szero)}$.
%\end{proof}

\begin{lemma}$C_{\textrm{tag}}\geq C_{\textrm{stoch,auth}}$.\label{lem:ctaggreater}
\end{lemma}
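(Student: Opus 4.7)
My plan is to reduce the authentication tag problem to the authentication (coding) problem by adapting the classical Ahlswede--Dueck identification-from-transmission construction~\cite{AhlswedeD:89}, exploiting the fact that an authentication code is strictly stronger than a transmission code because its decoder has a legitimate $\reject$ symbol to fall back on.

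Fix any $R < \cst$, any $\lambda_2 \in (0,1)$, and any $\delta > 0$. By the definition of $\cst$, for every sufficiently large $n$ there exists an $(N_n,n)$-authentication code $(\Prob_{X^n|M},\Phi)$ with error probability $\epsilon_n \to 0$ and $N_n \geq 2^{nR}$. Invoking the standard Ahlswede--Dueck combinatorial construction, I would then fix an integer $K$ and a family $\{T_{\tilde{m}}\}_{\tilde{m}\in[\tilde{N}_n]}$ of size-$K$ subsets of $[N_n]$ satisfying $|T_{\tilde{m}}\cap T_{\tilde{m}'}| \leq (\lambda_2/2)\,K$ for all $\tilde m \neq \tilde m'$, with $\tilde{N}_n$ as large as $2^{2^{n(R-\delta)}}$; such a family is produced by a random choice of subsets together with a union bound over the $\binom{\tilde{N}_n}{2}$ pairs, which is the same counting that underlies the doubly exponential growth in~\cite{AhlswedeD:89}. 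The tag encoder for $\tilde{m}$ draws $M$ uniformly from $T_{\tilde{m}}$ (using the transmitter's private randomness) and then transmits $X^n \sim \Prob_{X^n|M}$, while the tag decoder for candidate $\hat{\tilde{m}}$ computes $\hat{M}=\Phi(Y^n)$ and outputs $\accept$ iff $\hat{M} \in T_{\hat{\tilde{m}}}$.

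The two error analyses are short. For false alarm, the no-adversary guarantee of the inner code yields $\Phi(Y^n)=M \in T_{\tilde{m}}$ except with probability $\epsilon_n$, so $\lambda_1 \leq \epsilon_n$. For missed detection, fix $\tilde{m}\neq\hat{\tilde{m}}$ and any adversary strategy $\Probc_{S^n|Z^n}$; conditioned on $M=m$, the adversarial guarantee of the inner code ensures $\Phi(Y^n)\in\{m,\reject\}$ except with probability $\epsilon_n$. Since $\reject\notin T_{\hat{\tilde{m}}}$, acceptance then forces $m\in T_{\hat{\tilde{m}}}$, and averaging over $M\sim\mathrm{Unif}(T_{\tilde{m}})$ gives an acceptance probability bounded by $|T_{\tilde{m}}\cap T_{\hat{\tilde{m}}}|/K + \epsilon_n \leq \lambda_2/2+\epsilon_n \leq \lambda_2$ for $n$ large. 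Taking $n\to\infty$ and then $\delta\to 0$ yields $\Ctag \geq R$ for every $R<\cst$, establishing the lemma.

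The main obstacle is really a conceptual one rather than a calculational one: one must verify that the $\reject$ symbol of the inner authentication code plays the role of an ``honest abstention'' that is never in any $T_{\hat{\tilde{m}}}$, which is what lets the adversary's best attack (steering $\Phi$ away from $m$) be converted into either a $\reject$ or a confusion penalty that is controlled by the inner code's $\epsilon_n$-guarantee uniformly over strategies $\Probc_{S^n|Z^n}$. A minor technical point worth checking carefully is that the adversary, although allowed to depend on $Z^n$ (and hence indirectly on the hidden $M$), still satisfies the per-$m$ error bound of the inner code for every fixed strategy, so the union-bound step above is valid after averaging over $M$.
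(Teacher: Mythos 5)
Your proposal is correct and follows essentially the same route as the paper: the paper also composes an inner authentication code with the Ahlswede--Dueck identification construction, except that it invokes an off-the-shelf identification code for a noiseless channel and feeds its codewords as messages into the authentication code, whereas you inline the underlying combinatorial object (the family of size-$K$ subsets of $[N_n]$ with small pairwise intersections) and analyze it directly. Your handling of the key subtlety --- that the inner code's guarantee $\Pr(\Phi(Y^n)\notin\{m,\reject\}\mid m)\leq\epsilon_n$ holds per message and uniformly over strategies $\Probc_{S^n|Z^n}$, so it survives averaging over $M\sim\mathrm{Unif}(T_{\tilde m})$ --- matches the paper's reliance on the same fact.
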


\begin{proof}[Proof sketch] The proof proceeds along the lines of~\cite{SangwanBDP:19ISIT}. The key idea here is to first construct an identification code~\cite{AhlswedeD:89} for a noiseless channel and then transmit the codewords from the identification code using an authentication code for this channel. We furnish the details in the \app.
\end{proof}
\begin{lemma}\label{lem:dichotomy}
Whenever $C_{\mathrm{tag}}> 0$, $C_{\mathrm{stoch,auth}}=C_{\szero}$.	
\end{lemma}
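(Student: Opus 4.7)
The plan is to prove both $\cst\leq C_\szero$ and $\cst\geq C_\szero$. The upper bound is immediate: the no-adversary condition on an $(N,n)$-authentication code makes it a maximum-error channel code for the DMC $\mathsf{W}^{(\szero)}_{Y|X}$, whose rate cannot exceed $C_\szero$.

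For the matching lower bound, I would concatenate a capacity-achieving non-adversarial channel code with an authentication tag, using the doubly-exponential growth of the tag's message set to absorb the authentication overhead into a sublinear suffix. Fix $R<C_\szero$, $\epsilon>0$, and $\alpha\in(0,\Ctag)$. Shannon's noisy channel coding theorem for $\mathsf{W}^{(\szero)}$ supplies, for each large $n'$, an encoder--decoder pair $(P^{(m)}_{X^{n'}|M},\Dec^{(m)})$ carrying $2^{n'R}$ messages with max-error at most $\epsilon/2$. The definition of $\Ctag$ supplies, for each large $k$, an $(N_k,k)$-authentication tag $(P^{(t)}_{X^k|M},\{\phi_{\hat m}\}_{\hat m})$ with error pair $(\epsilon/2,\epsilon/2)$ and $\log_2\log_2 N_k\geq k\alpha$. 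Choose $k=k(n')$ as the smallest integer with $2^{k\alpha}\geq n'R$, giving $k=O(\log n')$ and $N_k\geq 2^{n'R}$. The concatenated encoder, on input $m$, draws $X^{n'}_1\sim P^{(m)}_{X^{n'}|M=m}$ and, independently, $X^k_2\sim P^{(t)}_{X^k|M=m}$ (identifying $m$ with a distinct tag message via an injection), and transmits the pair of length $n=n'+k$. The decoder computes $\hat M=\Dec^{(m)}(Y^{n'}_1)$ and outputs $\hat M$ if $\phi_{\hat M}(Y^k_2)=\accept$, and $\reject$ otherwise. Since $k=O(\log n')$, the rate $\tfrac{n'R}{n'+k}$ tends to $R$, which may be taken arbitrarily close to $C_\szero$.

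The error analysis breaks into two cases. Without the adversary, union-bounding the message-code error and the tag's false alarm gives total error at most $\epsilon$. With the adversary, an authentication error requires both $\hat M\neq m$ and $\phi_{\hat M}(Y^k_2)=\accept$. The crucial structural fact is that the two encoders use independent private randomness, so conditioned on $m$, $X^k_2$ is independent of all phase-1 variables. Marginalizing phase-1 then shows that, for any concatenated adversary strategy $Q_{S^n|Z^n}$, the induced phase-2 strategy $\tilde Q^{(m)}(S^k_2\mid Z^k_2)=\mathbb{E}_{Z^{n'}_1\mid m}[Q(S^k_2\mid Z^{n'}_1,Z^k_2)]$ is a legitimate tag-adversary strategy, and the tag's missed-detection guarantee bounds $\Pr[\phi_{\hat m}(Y^k_2)=\accept\mid m]\leq\epsilon/2$ for each fixed $\hat m\neq m$.

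The main technical obstacle is to promote this per-candidate marginal bound to a bound on the data-dependent random candidate $\hat M$ chosen by the phase-1 decoder, in the presence of adversarial coupling between the two phases via $Q_{S^n|Z^n}$. I would handle this by conditioning on the full phase-1 realization $(X^{n'}_1,Z^{n'}_1,S^{n'}_1,Y^{n'}_1)$—which fixes $\hat M$—and exploiting cross-phase encoder independence to recast the residual phase-2 experiment as a standalone tag experiment against a realization-dependent but valid tag-adversary strategy; averaging over phase-1 realizations then delivers the desired $\epsilon/2$ bound on the adversarial error. Hence $\cst\geq R$ for every $R<C_\szero$, and combined with the upper bound this yields $\cst=C_\szero$.
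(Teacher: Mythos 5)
Your architecture is the same as the paper's: a two-phase code (non-adversarial channel code followed by a logarithmically short authentication tag), the same no-adversary union bound, the same split on whether $\hat M=m$, and the same marginalization of the first-phase observation into an induced tag-adversary strategy $\tilde Q^{(m)}$. For a \emph{fixed} candidate $\hat m\neq m$ that last step is sound, since $(X_1^{n'},Z_1^{n'})$ is indeed independent of $(X_2^{k},Z_2^{k})$ given $m$ and $Y_2^{k}$ does not depend on $S_1^{n'}$.

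The gap is in your resolution of the step you yourself flag as the crux: passing from fixed $\hat m$ to the data-dependent $\hat M=g(Y_1^{n'})$. Your claim that ``conditioned on $m$, $X_2^{k}$ is independent of all phase-1 variables'' is false once the phase-1 variables include $S_1^{n'}$ and $Y_1^{n'}$: the adversary's strategy $Q_{S^n|Z^n}$ may choose $S_1^{n'}$ as a function of the \emph{entire} observation $Z^n$, including $Z_2^{k}$, which depends on $X_2^{k}$. Hence conditioning on a phase-1 realization $(x_1,z_1,s_1,y_1)$ tilts the conditional law of $X_2^{k}$ away from the tag encoder's distribution (by a factor proportional to $Q(s_1\mid z_1,Z_2^k)$), so the residual phase-2 experiment is \emph{not} a standalone tag experiment with the tag's encoder and a valid $Q_{S^k|Z^k}$, and the missed-detection guarantee cannot be invoked per realization. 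The paper handles this in the opposite order (its Lemma~\ref{lem:errorprob}): it first removes the dependence on the random candidate by using the conditional independence of $Y^{l}$ and $Y_{l+1}^n$ given $(X^n,Z^n,S^n,M)$ to replace $\hat M$ by a worst-case fixed $\hat m\neq m$ \emph{without} a union bound, and only then marginalizes $Z^{l}$ (whose distribution given $m$ genuinely does not involve phase 2) into the adversary's strategy. You need an argument of this form---decoupling the candidate from the tag-phase statistics before conditioning---rather than conditioning on the full phase-1 realization.
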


\proof We first note that $C_{\mathrm{stoch,auth}}$ is always bounded form above by $C_{\szero}$ as the latter is the capacity of the channel $\wy$ when there is no adversary. 

In the following we prove that any rate smaller than $C_{\szero}$ is achievable whenever $C_{\mathrm{tag}}$ is positive.  Our achievability relies on a two-phase scheme. The first phase consists of a code for the (non-adversarial) channel $\wy(\cdot|\cdot,\szero)$. The second phase is a short phase used to verify that the message has been decoded correctly in the first phase. This architecture has been previously been shown to be capacity-achieveing  in~\cite{SangwanBDP:19ISIT} for channels with oblivious adversaries. In the following, we prove that this property continues to hold even when the adversary has a noisy view of the input codeword. 
 
 Suppose that $C_{\mathrm{tag}}>0$. Let $R<C_{\szero}$. Let $\epsilon>0$ be the target probaility of error for the authentication code. For every $n>0$, Let $t_n=\lceil (\log nR)/C_{\mathrm{tag}}\rceil$. Choose $n$ large enough such that the following are ensured:
 \begin{itemize}
 \item[\em a)] There exists a $(2^{nR},n-t_n)$  channel code with encoder $f:[2^{nR}]\to\cX^{n-t_n}$ and decoder $g:\cY^{n-t_n}\to[2^{nR}]$ for the (non-adversarial) channel $\mathsf{W}_{Y|X,S=\szero}$ with maximal error probability at most $\epsilon/2$.	
 \item[\em b)]There exists a $(2^{nR},t_n)$ authentication tag $(\hat{\Prob}_{X^{t_n}|M},\set{\phi_m})$ for $(\wy,\uz)$ achieving error probabilities $(\epsilon/2,\epsilon/2)$.
  \end{itemize}
  
  We form a $(2^{nR},n)$ authentication code $(\Prob_{X^n|M},\Phi)$ by concatenating the above codes as follows. The encoder transmits the channel code followed by the authentication tag, \emph{i.e.}, 
  \[\Prob_{X^n|M}(x^n|m)= \begin{cases}\hat{\Prob}_{X^{t_n}|M}(x_{n-t_n+1}^n|m)& \mbox{if }x^{n-t_n}=f(m),\\ 0&\mbox{otherwise.}\end{cases}\]
  The decoder first decodes the message from the channel code and then authenticates it using the authentication tag. Let $\hat{m}\triangleq g(y^{n-t_n})$. The decoder $\Phi$ is defined as
  \[\Phi(y^n)=\begin{cases}
\hat{m}& \mbox{if }\phi_{\hat{m}}(y_{n-t_n+1}^n)=\accept\\
\reject & \mbox{otherwise}.	
\end{cases}
\]
We argue that the authentication code thus constructed achieves an error probability of $\epsilon$. Consider the following two cases.
\paragraph{Case 1. No adversary} When there is no adversary, the error event is contained in the union of the event that the channel code $(f,g)$ decodes to an incorrect message and the event that the authentication tag $(\hat{\Prob}_{X^{t_n}|M},\set{\phi_m})$ rejects a correct message from the channel code. Thus, \ifthenelse{\boolean{arxiv}}{
\begin{align}
\Perr^{(\mathrm{no-adv})}&\leq \max_{m\in[2^{nR}]}\bigg[\Pr(g(Y^{n-t_n})\neq m|m\mbox{ sent}, S^{n-t_n}=\szero^{n-t_n})+ \\& \ 	\Pr(\phi_m(Y^{n-t_n+1})_m=\reject|m\mbox{ sent},S_{n-t_n+1}^n=\szero_{n-t_n+1}^n)\bigg]\\
&\leq \epsilon/2+\epsilon/2=\epsilon.
\end{align}}{$\Perr^{(\mathrm{no-adv})}<\epsilon/2+\epsilon/2=\epsilon$.}
\paragraph{Case 2. Adversary present}  In this case, the error probability can be expressed as follows.  In the following, let $l=n-t_n$ for ease of notation.
{\begin{align}
\lefteqn{\max_{m}\sup_{\Probc_{S^n|Z^n}}\Pr_{X^n,Y^n,Z^n,S^n}(\Phi(Y^n)\notin\set{M,\reject}|m\textrm{ sent})}\\
& = \max_{m} \sup_{\Probc_{S^n|Z^n}}\bigg[\Pr(\Phi(Y^n)\notin\set{m,\reject}, g(Y^{l})=m|m\textrm{ sent})\\
&\quad +\Pr(\Phi(Y^n)\notin\set{m,\reject}, g(Y^{l})\neq m|m\textrm{ sent})\bigg]\\
& \leq \max_m\sup_{\Probc_{S^n|Z^n}}\Pr(\Phi(Y^n)\notin\set{m,\reject}, g(Y^{l})=m|m\textrm{ sent})\ +\\
&\quad \max_m\sup_{\Probc_{S^n|Z^n}}\Pr(\Phi(Y^n)\notin\set{m,\reject}, g(Y^{l})\neq m|m\textrm{ sent})\label{eq:pesum}
\end{align}}
The first term in~\eqref{eq:pesum} corresponds to the channel code decoding to the correct message. In this case,  
the decoder $\Phi$ outputs either the correct message or $\reject$, both of which are acceptable outcomes when an adversary is present. Thus,
\begin{equation*}
	\Pr_{X^n,Y^n,Z^n,S^n}(\Phi(Y^n)\notin\set{m,\reject}, g(Y^{l})=m|m\textrm{ sent})=0.
\end{equation*} 
Next, we analyze the second term in~\eqref{eq:pesum}. \ifthenelse{\boolean{arxiv}}{We perform the error analysis by  allowing the adversary to choose their strategy based on the true message $m$ and the channel code output $\hat{m}$. We argue that, in this setting, even when the adversary knows $m$ and $\hat{m}$, it is sufficient for the adversary to choose a strategy based on the subset of observations $Z_{l+1}^n$ (in addition to $m$ and $\hat{m}$) rather than all of $Z^n$.The probability of not rejecting a wrong message is upper bounded as follows.}{}
\begin{align}
\lefteqn{\max_m\sup_{\Probc_{S^n|Z^n}}\Pr(\Phi(Y^n)\notin\set{m,\reject}, g(Y^{l})\neq m|m\textrm{ sent})}\\
%\lefteqn{\sup_{\Probc_{S^n|Z^n}}\Pr_{M,X^n,Y^n,Z^n,S^n}(\Phi(Y^n)\notin\set{M,\reject}, g(Y^{l})\neq M))
%}\\
%	&=\sup_{\Probc_{S^n|Z^n}}\Pr_{X^n,Y^n,Z^n,S^n}(\phi_{\hat{M}}(Y_{l+1}^n)=\accept|\hat{M}\neq M)\\
	&\label{eq:pe}\leq  \max_{m,\hat{m}\neq m}\sup_{\Probc_{S^n|Z^n}}\Pr_{X^n,Y^n,Z^n,S^n}(\phi_{\hat{m}}(Y_{l+1}^n)=\accept|m\mbox{ sent})\\
	&= \max_{m,\hat{m}\neq m}\sup_{\Probc_{S^n|Z^n}} \sum_{\substack{x^n,y^n,z^n,s^n\\ \mathrm{s.t. } \phi_{\hat{m}}(y_{l+1}^n)=\accept}}  \Prob_{X^n|M}(x^n|m)\uz(z^n|x^n) \\ 
	& \qquad\qquad\quad\qquad\qquad\qquad\qquad \times \Probc_{S^n|Z^n}(s^n|z^n)\wy(y^n|x^n,s^n)\\
&= \max_{m,\hat{m}\neq m}\sup_{\Probc_{S^n|Z^n}} \sum_{\substack{x^n,y^n,z^n,s^n\\ \mathrm{s.t. } x^{l}=f(m)\\ \phi_{\hat{m}}(y_{l+1}^n)=\accept}}  \hat{\Prob}_{X^{t_n}|M}(x_{l+1}^n|m)\uz(z^n|x^n) \\ 
	& \qquad\quad\qquad\qquad\qquad\qquad \times \Probc_{S^n|Z^n}(s^n|z^n)\wy(y^n|x^n,s^n)\\
&= \max_{m,\hat{m}\neq m}\sup_{\Probc_{S^n|Z^n}} \sum_{\substack{x_{l+1}^n,y_{l+1}^n,z^n,s^n\\ \mathrm{s.t. } \phi_{\hat{m}}(y_{l+1}^n)=\accept}}  \hat{\Prob}_{X^{t_n}|M}(x_{l+1}^n|m) \\
& \qquad\qquad \times \uz(z^l|f(m))\uz(z_{l+1}^n|x_{l+1}^n)\Probc_{S^{n}|Z^n,S^n}(s^n|z^n)\\ 
& \label{eq:probcfull}\qquad\qquad  \times \wy(y_{l+1}^n|x_{l+1}^n,s_{l+1}^n).
\end{align}
In the above,~\eqref{eq:pe} follows from our design of the authentication code as a two-phase code (see~\applem{lem:errorprob} for details). Note that~\eqref{eq:pe} does not require a union bound over all $\hat{m}\neq m$, rather, only bounding in terms of the worst-case $\hat{m}$ suffices. Next, defining
\[\hat{\Probc}_{S^{t_n}|Z^{t_n}}(s_{l+1}^n|z_{l+1}^n)=\sum_{z^l,s^l}\uz(z^l|f(m))\Probc_{S^{n}|Z^n,S^n}(s^n|z^n),\]
the expression in \eqref{eq:probcfull} may be rewritten as 
\begin{align}
	&\max_{m,\hat{m}\neq m}\sup_{\hat{\Probc}_{S^{t_n}|Z^{t_n}}} \sum_{\substack{x_{l+1}^n,y_{l+1}^n,z_{l+1}^n,s_{l+1}^n\\ \mathrm{s.t. } \phi_{\hat{m}}(y_{l+1}^n)=\accept}}  \hat{\Prob}_{X^{t_n}|M}(x_{l+1}^n|m) \hat{\Probc}_{S^{t_n}|Z^{t_n}}(s_{l+1}^n|z_{l+1}^n)\\
	&\qquad\times  \wy(y_{l+1}^n|x_{l+1}^n,s_{l+1}^n)\\
	& = \max_{m,\hat{m}\neq m}\sup_{\hat{\Probc}_{S^{t_n}|Z^{t_n}}} \Pr_{X^{t_n},Y^{t_n},Z^{t_n},S^{t_n}} (\phi_{\hat{m}}=\accept|m\mbox{ sent})\leq \epsilon/2.\IEEEQEDhere
	%	&\leq \epsilon/2.
\end{align}
%\color{red} Explain the steps in the above chain of inequalities.
%\end{proof}
\begin{remark} The proof of Lemma~\ref{lem:dichotomy} suggests a natural two-phase architecture for authentication. The  the first phase may be thought of as the ``payload'' which can be transmitted using any reliable code for the channel $\mathsf{W}^{(\szero)}$ without adversary. The second phase is a short authentication phase. Our analysis shows that the length of the second phase need  be no larger than logarithmic in the overall block length to achieve vanishing probability of error at rates achieving the capacity. This aspect of communication with authentication has been  noted in different adversarial models in prior work~\cite{SangwanBDP:19ISIT}.
\end{remark}
\begin{remark} In the proof of Lemma~\ref{lem:dichotomy}, the no-adversary case proceeds identically to the setting with an oblivious adversary~\cite{SangwanBDP:19ISIT}. However, unlike in the oblivious setting, when adversary is present, they can correlate the attack strategy in the first and the second phases. Thus, it is not \emph{a priori} clear if the authentication tag phase can be analyzed separately from the communication phase. Our proof shows that even though the adversary may choose a strategy that depends on the entire transmission $\Probc_{S^n|Z^n}$, such a strategy is not more powerful than strategies $\hat{\Probc}_{S^{t_n}|Z^{t_n}}$ that depend only on the authentication tag phase and the knowledge of the message and the first phase reconstruction.
\end{remark}

\section{Overwritability with Stochastic Codes}

We say that $\Prob_1,\Prob_2\in\mathcal{P}(\cX)$ are \emph{myopically indistinguishable} if \[\sum_{x\in\cX}\Prob_1(x)\uz(z|x)=\sum_{x\in\cX}\Prob_2(x)\uz(z|x) \ \forall z\in\cY.\] Let $\fP_{\mathrm{ind}}$ be the partition of $\mathcal{P}(\cX)$ into equivalence classes formed by grouping indistinguishable distributions.  
\begin{definition}[\distribution overwritability]\label{def:udistow} We say that $\wy$ is \distribution overwritable if for every $\cP\in\fP_{\mathrm{ind}}$ and $x'\in\cX$, there is an adversarial strategy $\Probc_{S|Z}$ such that, for every $\Prob_X\in\cP$ and $y\in\cY$, 
\begin{equation}
\expect{X,Z,S}{\wy(y|X,S)}=\wy(y|x',\szero),	
\end{equation}
where, the expectation is with respect to the joint distribution $\Prob_{X,Z,S}= \Prob_X \uz \Probc_{S|Z}$.
\end{definition}
\begin{remark} \label{rem:nonowchar}It follows from the above definition that when $\wy$ is not \distribution overwritable, there exist $\cP\in\fP_{\mathrm{ind}}$ and $x'\in\cX$ such that for all $\Probc_{S|Z}$, there exists $\Prob_X\in\cP$ with 
\begin{equation}
\mathbb{V}\left(\expect{X,Z,S}{\wy(\cdot|X,S)},\wy(\cdot|x',\szero)\right)>0.
\end{equation}
Further, since we restrict our attention to finite alphabet channels with finite state spaces, and since every $\cP\in\fP_{\mathrm{ind}}$ is a compact subset of $\mathcal{P}(\cX)$, there exists $\nu>0$ such that 
\begin{equation}\label{eq:nonow}
\max_{\substack{\cP\in\fP_{\mathrm{ind}}\\ x'\in\cX}}\min_{\Probc_{S|Z}}\max_{\Prob_X\in\cP}\mathbb{V}\left(\expect{X,Z,S}{\wy(\cdot|X,S)},\wy(\cdot|x',\szero)\right)>\nu.
\end{equation}
\end{remark}
\begin{proof}[Proof of Theorem~\ref{thm:achievability}] We prove that $\cst>0$ for channels that are not $\uz$-distribution writable by showing the existence of authentication tags with positive rates for such channels and invoking Theorem~\ref{thm:dichotomy}. Suppose that $(\wy,\uz)$ satisfy~\eqref{eq:nonow} for some $\nu>0$ and let $(\cP,x')$ be a pair achieving the maxima in~\eqref{eq:nonow}.

\ifthenelse{\boolean{arxiv}}{\paragraph*{\underline{Case 1: $|\cP|=1$}} Let $\cP=\set{P_X}$. The result for this case follows from~\applem{lem:uzpow} by noting that, in this case, $\wy$ is $(\uz,\cX)$-overwritable (as stated in~\appdef{def:upow}). 

%This allows us to claim the theorem result for the case when the set $\cP$ achieving the maximum in~\eqref{eq:nonow}  contains only a single distribution, \emph{i.e.} there exist $(\Prob^*,x^*)$ and $\nu>0$ such that
% \begin{align}
% & \min_{\Probc_{S|Z}}\mathbb{V}\left(\sum_{\substack{x\in\cX\\ z\in\cZ\\ s\in\cS}}\Prob^*(x)\uz(z|x)\Probc_{S|Z}(s|z)\wy(y|X,S),\wy(y|x^*,\szero)\right) \\ 
%& =\max_{\substack{\cP\in\fP_{\mathrm{ind}}\\ x'\in\cX}}\min_{\Probc_{S|Z}}\max_{\Prob_X\in\cP}\mathbb{V}\left(\expect{X,Z,S}{\wy(\cdot|X,S)},\wy(\cdot|x',\szero)\right)\\
% & >\nu.
% \end{align}
 \paragraph*{\underline{Case 2: $|\cP|>1$}} Now, we extend the result to the case when the set $\cP$ achieving the maximum contains more than one element.}{} First, fix $\delta>0$ and let $\cP_{\delta}$ be a $\delta$-net covering $\cP$, \emph{i.e.}, for all $P\in\cP$, there is $P'\in\cP_{\delta}$ such that $\mathbb{V}(P',P)<\delta$. Note that $\cP_\delta$ may be chosen to have a finite number of elements. In particular, we can always find $\cP_{\delta}$ such that $|\cP_\delta|<\frac{1}{\delta^{|\cX|}}$. Let $\cP_{\delta}=\set{P^{(1)},P^{(2)},\ldots,P^{(K)}}$.
 
Let $\alpha,\beta>0$ be small enough so as to satisfy
\begin{equation}
	\frac{1+\beta}{(1-\alpha)(1-\beta)}\left(\mu/4+2\frac{\alpha(1+\beta)}{1-\beta}\right)\leq \frac{\mu}{2}.\label{eq:alphabeta1}
\end{equation} Pick $\mathfrak{B}=\set{\cB_m:m\in[N]}$ as per~\applem{lem:overlappingsets}. For each $k\in[K]$, let $(P^{(k)}_{X^n|M},\set{\phi^{(k)}_m})$ be a $(P^{(k)}_X,n,\mathfrak{B},\mu/4)$-authentication tag  as given in~\appdef{def:authtag}.
 
Our achievability relies on an authentication tag consisting of several sub-blocks. Consider a $(N,nL)$-authentication tag $(P_{X^{nL}|M},\set{\phi_m})$ with encoder and decoder maps defined below. For each $l\in[L]$, the encoder uniformly picks $k$ from $[K]$ and the sub-block $X_{(l-1)n}^{ln}$ according to $P_{X^n|M}^{(k)}$. Thus,
  \[P_{X^{nL}|M}(x^{nL}|m)= \prod_{l=1}^L \sum_{k=1}^K\frac{1}{K}P_{X^n|M}^{(k)}(x_{(l-1)n+1}^{ln}|m).\]
  The decoder first decodes each sub-block and outputs an $\accept$ only if, for each block, the corresponding decoder outputs $\accept$. 
\ifthenelse{\boolean{arxiv}}{ \[\phi_m(y^{nL})=\begin{cases}\accept & \mbox{if }\phi_m^{(k)}(y_{(l-1)n+1}^{ln})=\accept\ \forall\ l\in[L]\\ \reject&\mbox{otherwise}.
	\end{cases}
\]}{}Note that, the decoder doesn't know \emph{a priori} the value of $k$ for each sub-block (since the value of $k$ is randomly and privately chosen by the transmitter). In fact, the decoder $\set{\phi_m^{(k)}}$ for any $(P^{(k)}_X,n,\mathfrak{B},\mu/4)$-authentication tag only needs the knowledge of $\mathfrak{B}$ and not of $P^{(k)}_X$. Let $\kappa^{(k)}=\kappa(P_X^{(k)},x')$ (as defined in~\appeq{eq:kappa}). Note that, over the uniform choice of $k$, $\Pr(\kappa^{(k)}>\mu/2)\geq \frac{1}{K}$ as long as $\delta$ is small enough. Thus, by~\applem{lem:typicality}, with probability at least $1/K$, $\phi_{\hat{m}}(\cdot)=\reject$ when $m\neq \hat{m}$. We first let the number of sub-blocks to be large enough and then the length of each sub-block to be large enough to conclude that the error probabilities can be made as small as desired.
\end{proof} 

\newcommand{\mif}{\mathrm{if}\ }
\newcommand{\mand}{\ \mathrm{and}\ }
\section{Examples}\label{sec:examples}
\begin{definition}[($r$-overwritable BSC$(p)$,BEC$(u)$)] The $(r$-overwritable BSC$(p)$,BEC$(u)$ is channel defined with $\cX=\cY=\set{0,1}$, $\cZ=\set{0,1,E}$, $\cS=\set{\szero,0,1}$, and the transition probabilities
\begin{align}
	\wy(y|x,s)&\triangleq\begin{cases}
		1-p & \mif y=x\mand s=\szero\\
		p & \mif y\neq x\mand s=\szero\\
		1 & \mif y=x\mand s=x\\
		1-r & \mif y=x\mand s=x\oplus 1\\
		r & \mif y\neq x\mand  s=x\oplus 1 \end{cases}\\
	\uz(z|x) & \triangleq \begin{cases} 1-u & \mif z=x \\ u & \mif z=E\end{cases}			
\end{align}
	
\end{definition}
\begin{example}[\uz-distribution overwritable but not \uz-overwritable] Consider  ($r$-overwritable BSC($p$),BEC($u$))  with $p\in(0,1/2)$, $r\in(p,1-p)$ and $u>0$. First, we note that for such channels, al input distributions are myopically distinguishable (since the probability of observing a $0$ (resp. $1$) by the adversary is proportional to the probability that the input is a $0$ (resp. $1$). Next, given any input distribution $P_X=(p_0,1-p_0)$, one can show the adversary can find a $Q_{S|Z}$ satisfying the $\uz$-distribution overwritability condition. Finally, to see that the channel is not $\uz$-overwritable, suppose that the adversary intends to overwrite with $x'=0$. When the adversary observes $E$, the adversarial strategy must work regardless of the input symbol. It turns out that there is no strategy that simultaneously works for $x=0$ and $x=1$.
	
\end{example}
\begin{example}[$(\uz,\cX)$-distribution overwritable but not $\uz$-distribution overwritable] Consider the ($r$-overwritable BSC($p$),BEC($u$)) channel with $p\in(0,1/2)$, $r\in(p,1-p)$ and $u=1$. In this case, since $\uz$  outputs an $E$ with probability $1$, all input distributions are myopically indistinguishable. Following a similar reasoning as the previous example, we conclude that there is no adversarial strategy that can work for all input distributions simultaneously. On the other hand, when the adversary knows the input distribution, they can find an adversarial strategy that  can lead to the right output distribution for $(\uz,\cX)$-distribution overwritability.
	
\end{example}
\bibliographystyle{IEEEtran}
\bibliography{references.bib}
\ifthenelse{\boolean{arxiv}}{}{\clearpage}
\appendices
\section{Proof of Lemma~\ref{lem:ctaggreater}}
\begin{proof}	Let $\mathsf{W}_{id}$ denote the noiseless binary channel whose input equals the output with probability one. Fix $R<C_{\textrm{stoch,auth}}$. Let $\hat{R}\in(R,C_{\textrm{stoch,auth}})$ and $\tilde{R}=R/\hat{R}$. Let $\lambda_1,\lambda_2\in(0,1/2)$ be desired misauthentication probabilities. 
Following~\cite[Theorem~1]{AhlswedeD:89}, for all $\tilde{n}$ large enough, there exist $(N^{\mathrm{(id)}},\tilde{n})$-identification codes $\left(\tilde{\Prob}_{B^{\tilde{n}}|M},\set{	\tilde{\phi}_m}_{m\in[N^{\mathrm{(id)}}]}\right)$ for  channel $\mathsf{W}_{id}$ that achieves misidentification probabilities $(\lambda_1/2,\lambda_2/2)$  and $N^{\mathrm{(id)}}\geq \lfloor 2^{2^{\tilde{n}\tilde{R}}}\rfloor$.  Note that the number of codewords for such a code is at most $N^{\mathrm{(id)}}_{out}\triangleq 2^{\tilde{n}}$. Next, from the definition of authentication capacity, for all $n$ large enough, there exist $(N^{\mathrm{(auth)}},n)$-authentication codes $\left(\hat{\Prob}_{X^{n}|\mathbf{B}},\hat{\Phi}\right)$ for the channel $(\wy,\uz)$ with error probability at most $\max\set{\lambda_1/2,\lambda_2/2}$ and $N^{\mathrm{(auth)}}\geq \lfloor 2^{n \hat{R}}\rfloor$. 

Now, we compose the two codes in the following manner. First, let $n$ be large enough so that there exist both an $(N^{\mathrm{(id)}},\tilde{n})$-identification code and an $(N^{\mathrm{(auth)}},n)$-authentication code of the form above with $N^{\mathrm{(auth)}}=N^{\mathrm{(id)}}_{out}$.  Consider an $(N^{\mathrm{(id)}},n)$-authentication tag for the channel $(\wy,\uz)$ defined through the encoder 
\[\Prob_{X^n|M}(x^n|m)=\sum_{b^{\tilde{n}}\in\set{0,1}^{\tilde{n}}}\hat{\Prob}_{X^n|\mathbf{B}}(x^n|b^{\tilde{n}})\tilde{\Prob}_{B^{\tilde{n}}|M}(b^{\tilde{n}}|m),\]
	and the decoders
	\[\phi_{\hat{m}}(y^n)=\begin{cases}\reject& \mbox{if } \Phi(y^n)=\reject \\
	\reject & \mbox{if } \Phi(y^n)=\hat{\mathbf{b}}\in\{0,1\}^{\tilde{n}}\mbox{ and }\\ &\quad \tilde{\phi}_{\hat{m}}(\mathbf{b})=\reject\\
	\accept&\mbox{otherwise},\end{cases}\]
	for all $m,\hat{m}\in[N^{\mathrm{(id)}}]$, $x^n\in\cX^n$, and $y^n\in\cY^n$.
	We note that the rate of this code is 
	\begin{align*}
 	\frac{1}{n}\log\log N^{\mathrm{(id)}} &\geq \frac{\tilde{n}{\tilde{R}}}{n}\\
 	&\geq \frac{\tilde{n}{\tilde{R}}}{(1/\hat{R})\log\tilde{N}} = \tilde{R} \hat{R}\\
 	&=R.
 \end{align*}
Further, the error probabilities for the authentication tag $(\tilde{\Prob}_{X^n|M},\set{\phi_{\hat{m}}})$ are bounded from above by the sum of the corresponding misidentification probabilities for the identification code $(\Prob_{B^{\tilde{n}}|M},\set{\tilde{\phi}_m})$ and the error probability for the authentication code $(\hat{\Prob}_{X^n|\mathbf{B}},\hat{\Phi})$. This shows that $C_{\mathrm{tag}}\geq C_{\mathrm{stoch,auth}}$.
\end{proof}
\section{}
\begin{lemma}\label{lem:errorprob} Let $\Phi$, $g$ and $\set{\phi_m}_{m}$ be defined as in the proof of Lemma~\ref{lem:dichotomy}. Then, we have, 
\begin{align*}
\lefteqn{\max_m\sup_{\Probc_{S^n|Z^n}}\Pr(\Phi(Y^n)\notin\set{m,\reject}, g(Y^{l})\neq m|m\textrm{ sent})}\\
	&\leq  \max_{m,\hat{m}\neq m}\sup_{\Probc_{S^n|Z^n}}\Pr_{X^n,Y^n,Z^n,S^n}(\phi_{\hat{m}}(Y_{l+1}^n)=\accept|m\mbox{ sent})
\end{align*}
	
\end{lemma}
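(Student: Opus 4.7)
The event on the left-hand side admits a clean expansion. By the definition of $\Phi$ from the proof of Lemma~\ref{lem:dichotomy}, $\Phi(y^n)$ equals either $\reject$ or $g(y^l)$, so $\{\Phi(Y^n) \notin \{m,\reject\}, g(Y^l) \neq m\}$ is exactly $\{g(Y^l) \neq m,\ \phi_{g(Y^l)}(Y_{l+1}^n) = \accept\}$. Splitting on the value of $g(Y^l)$ gives a disjoint union, so for any fixed strategy $\Probc$:
\[
\Pr_\Probc\bigl(\Phi(Y^n)\notin\{m,\reject\},g(Y^l)\neq m\mid m\bigr) = \sum_{\hat m\neq m} p_{\hat m}\, q_{\hat m},
\]
where $p_{\hat m}:=\Pr_\Probc(g(Y^l)=\hat m\mid m)$ and $q_{\hat m}:=\Pr_\Probc(\phi_{\hat m}(Y_{l+1}^n)=\accept\mid g(Y^l)=\hat m,m)$ (and $q_{\hat m}:=0$ when $p_{\hat m}=0$). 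Since the $p_{\hat m}$ form a subprobability distribution, the standard averaging bound $\sum_{\hat m\neq m} p_{\hat m} q_{\hat m} \leq \max_{\hat m\neq m} q_{\hat m}$ is immediate. The goal is now to upper bound each $q_{\hat m}$ by an unconditional success probability achievable by some adversary.

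For each $\Probc$ and $\hat m$ with $p_{\hat m}>0$, I plan to build a companion strategy $\Probc'$ whose unconditional acceptance probability dominates $q_{\hat m}$. The idea is to let $\Probc'$ use private randomness to run a rejection-sampling simulation of the first phase: given $z^n$, the adversary privately draws $\bar Z^l\sim\uz^l(\cdot\mid f(m))$, $(\bar S^l,\bar S_{l+1}^n)\sim\Probc(\cdot\mid\bar Z^l,z_{l+1}^n)$, and $\bar Y^l\sim\wy^l(\cdot\mid f(m),\bar S^l)$, retaining the draw only when $g(\bar Y^l)=\hat m$ and outputting the retained $\bar S_{l+1}^n$ (padding $S^l$ with $\szero^l$). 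Because $\bar Z^l$ is independent of the transmitted $Z^l$ but coupled to the actually-observed $z_{l+1}^n$ through $\Probc(\cdot\mid\bar Z^l,z_{l+1}^n)$, the joint distribution of $(\bar Z^l,\bar S^l,\bar Y^l,S_{l+1}^n)$ under $\Probc'$ given retention coincides with the joint distribution of $(Z^l,S^l,Y^l,S_{l+1}^n)$ under $\Probc$ conditioned on $g(Y^l)=\hat m$ and $Z_{l+1}^n=z_{l+1}^n$. Carrying this coupling through the channel $\wy^{t_n}$ then yields $\Pr_{\Probc'}(\phi_{\hat m}(Y_{l+1}^n)=\accept\mid m)\geq q_{\hat m}$. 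Chaining with the averaging bound and taking $\sup_\Probc$ and $\max_m$ produces the claimed inequality.

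The main obstacle will be the verification in the last step: the $Z_{l+1}^n$-marginal under $\Probc'$ is the natural (unconditioned) one, whereas in $q_{\hat m}$ the $Z_{l+1}^n$-distribution is tilted by conditioning on $g(Y^l)=\hat m$ (this tilt is nonzero precisely because $S^l$ under $\Probc$ can depend on $Z_{l+1}^n$). The rejection-sampling construction must therefore be arranged so that the adversary's private randomness absorbs the tilt, with the rejection normaliser factoring correctly against the marginal $\Pr(z_{l+1}^n\mid m)$ to yield the desired one-sided bound $\Pr_{\Probc'}(\phi_{\hat m}=\accept\mid m)\geq q_{\hat m}$ rather than a two-sided equality of the less useful weighted averages. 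This is the technical heart of the lemma; the algebraic identities needed are essentially the same ones used in the main derivation surrounding \eqref{eq:probcfull} in the proof of Lemma~\ref{lem:dichotomy}, where $\Probc$ is reduced to a $t_n$-length strategy $\hat\Probc$, and should reduce the problem to that already-verified manipulation.
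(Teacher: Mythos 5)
Your opening is exactly the paper's: you expand the event as the disjoint union over the value of $g(Y^l)$, write the left side as $\sum_{\hat m\neq m}p_{\hat m}q_{\hat m}$, and apply the averaging bound $\sum_{\hat m}p_{\hat m}q_{\hat m}\leq\max_{\hat m}q_{\hat m}$ (the paper's steps \eqref{eq:step2}--\eqref{eq:step5}). The divergence is at the one step that carries all the content: the paper simply replaces the conditional probability $q_{\hat m}=\Pr(\phi_{\hat m}(Y_{l+1}^n)=\accept\mid g(Y^l)=\hat m,m)$ by the unconditional $\Pr(\phi_{\hat m}(Y_{l+1}^n)=\accept\mid m)$, asserting this as an equality in \eqref{eq:step4} on the grounds that $Y^l-(X^n,Z^n,S^n,M)-Y_{l+1}^n$ is a Markov chain. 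You instead try to dominate $q_{\hat m}$ by the acceptance probability of a simulating adversary $\Probc'$, and this is where your argument has a genuine, unclosed gap.

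The rejection-sampling construction cannot deliver the one-sided bound $\Pr_{\Probc'}(\phi_{\hat m}=\accept\mid m)\geq q_{\hat m}$ as sketched. Conditioning on $\set{g(Y^l)=\hat m}$ tilts the joint law of $(X_{l+1}^n,Z_{l+1}^n)$ by the weight $w(z_{l+1}^n)=\Pr(g(Y^l)=\hat m\mid z_{l+1}^n,m)$, because the adversary's choice of $S^l$ may depend on $Z_{l+1}^n$. No admissible $\Probc'$ can reproduce or absorb that tilt: the marginal of $(X_{l+1}^n,Z_{l+1}^n)$ is fixed by the encoder and $\uz$ and is untouched by the adversary's private randomness. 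So your two quantities are averages of the same conditional acceptance kernels against two different measures on $z_{l+1}^n$ (tilted versus untilted), and neither dominates the other; indeed, by steering $g(Y^l)$ to $\hat m$ only on a small set of favorable $z_{l+1}^n$, an adversary can make $q_{\hat m}$ strictly exceed $\sup_{\Probc'}\Pr_{\Probc'}(\phi_{\hat m}=\accept\mid m)$, so the per-$\hat m$ domination you are aiming for is false in general; any repair must keep the factor $p_{\hat m}$ attached rather than bound $q_{\hat m}$ alone. Finally, the appeal to the manipulation around \eqref{eq:probcfull} is circular: that computation starts from the unconditional probability $\Pr(\phi_{\hat m}(Y_{l+1}^n)=\accept\mid m)$, i.e., it presupposes that the conditioning on $g(Y^l)=\hat m$ has already been removed, which is precisely what this lemma is invoked to do.
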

\begin{proof} Consider the following chain of inequalities.
	\begin{align}
	\lefteqn{\Pr(\Phi(Y^n)\notin\set{m,\reject}, g(Y^{l})\neq m|m\textrm{ sent})\label{eq:step1}}\\
	&= \Pr_{X^n,Y^n,Z^n,S^n}(\phi_{g(Y^l)}(Y_{l+1}^n)=\accept, g(Y^l)\neq m|m\mbox{ sent}) \label{eq:step2}\\
	&= \sum_{m'\neq m}\Pr_{X^n,Y^n,Z^n,S^n}(\phi_{m'}(Y_{l+1}^n)=\accept|g(Y^l)=m',m\mbox{ sent})\eqbr{\qquad\times} \Pr_{X^n,Y^n,Z^n,S^n}(g(Y^l)=m'|m\mbox{ sent}) \label{eq:step3}\\
	&= \sum_{m'\neq m}\Pr_{X^n,Y^n,Z^n,S^n}(\phi_{m'}(Y_{l+1}^n)=\accept|m\mbox{ sent})\eqbr{\qquad\times} \Pr_{X^n,Y^n,Z^n,S^n}(g(Y^l)=m'|m\mbox{ sent}) \label{eq:step4}\\
	& \leq \max_{\hat{m}\neq m}\Pr_{X^n,Y^n,Z^n,S^n}(\phi_{\hat{m}}(Y_{l+1}^n)=\accept|m\mbox{ sent})\eqbr{\qquad\times} \sum_{m'}\Pr_{X^n,Y^n,Z^n,S^n}(g(Y^l)=m'|m\mbox{ sent}) \label{eq:step5}\\
	& \leq \max_{\hat{m}\neq m}\Pr_{X^n,Y^n,Z^n,S^n}(\phi_{\hat{m}}(Y_{l+1}^n)=\accept|m\mbox{ sent}).
\end{align}

In the above,~\eqref{eq:step2} follows from the definition of $\Phi$, and~\eqref{eq:step4} is obtained by noting that $Y^l-(X^n,Z^n,S^n,M)-Y_{l+1}^n$ is a Markov chain due to our two phase coding scheme. Finally, taking the suprema with respect to $m$ and $Q_{S^n|Z^n}$ gives us the bound in the lemma statement. 
\end{proof}
\section{Overwritability conditions}
\begin{definition}[$\uz$-overwritability~\cite{BeemerKKGY:19}] \label{def:uow} We say that $\wy$ is \uz-overwritable if, for every $x'\in\cX$, there exists an adversarial strategy $\Probc_{S|Z}$ such that for every $x\in\cX$ and $y\in\cY$,
\[\expect{Z,S}{\wy(y|x,S)}=\wy(y|x',\szero),\]
	where, the random variables $(Z,S)$ are distributed according to the joint distribution $\uz(\cdot|x)\Probc_{Z|S}(\cdot|\cdot)$.
\end{definition}
%
%As noted in~\cite{}, whenever a channel \wy is \uz-overwritable, the authentication capacity equals zero. Further, if encoding is restricted to deterministic codes, the authentication capacity is zero when the channel \wy is stochastically degraded with respect to \uz, \emph{i.e.}, there exists a channel $\mathsf{V}_{Y|Z}$ such that \[\wy (y|x,\szero)=\sum_{z\in\cZ}\uz (z|x)\mathsf{V}_{Y|Z}(y|z) \forall y\in\cY,x\in\cX,\]
%and \wy satisfies the following condition.
\begin{definition}[$I$-overwritability~\cite{BeemerGKKY:20}] \label{def:iow} We say that $\wy$ is $I$-overwritable if, for every $x,x'\in\cX$, there exists an adversarial strategy $\Probc_{S|Z}$ such that for every $y\in\cY$,
\[\expect{Z,S}{\wy(y|x,S)}=\wy(y|x',\szero),\]
	where, the random variables $(Z,S)$ are distributed according to the joint distribution $\uz(\cdot|x)\Probc_{Z|S}(\cdot|\cdot)$.
\end{definition}

\begin{definition}[$(\uz,\cX)$-distribution overwritability]\label{def:upow} We say that $\wy$ is $(\uz,\cX)$-distribution overwritable if for every $P\in\mathcal{P}(\cX)$ and $x'\in\cX$, there is an adversarial strategy $\Probc_{S|Z}$ such that, for every $y\in\cY$, 
\begin{equation}
\expect{X,Z,S}{\wy(y|X,S)}=\wy(y|x',\szero),	
\end{equation}
where, the expectation is with respect to the joint distribution $\Prob_{X,Z,S}= \Prob_X \uz \Probc_{S|Z}$.
\end{definition}
\begin{definition}[$(P_X,n, \mathfrak{B},\mu)$-authentication tag] \label{def:authtag} Given $P_X\in\mathcal{P}(\cX)$, blocklength $n$,  a family of subsets of $\mathfrak{B}$ of $[n]$,  and a decoding threshold $\rho$, a $(P_X,n, \mathfrak{B},\mu)$-authentication tag is defined through the encoder map
\[\Prob_{X^n|M}(x^n|m)=\prod_{i\notin\cB}\Prob_X(x)\prod_{i\in \cB}\mathbbm{1}_{x'}(x), \quad x^n\in\cX^n,\]
 and the decoder maps $\set{\phi_m}$ specified below.
Given $y\in\cY$, $y^n\in\cY^n$, and $\cB\in\mathfrak{B}$, let \[\hat{P}_{y^n}(y|\cB)\triangleq \frac{\{i\in\cB:y_i=y\}}{|\cB|}.\]
For each $m\in[N]$, and  $y^n\in\cY^n$ the decoder $\phi_m$ outputs according to the following rule
\[\phi_m(y^n)=\begin{cases}
	\accept & \mathrm{if}\ \mathbb{V}(\hat{P}_{y^n}(\cdot|\cB_{m}),\wy(\cdot|x',\szero))<\rho\\
	\reject&\mathrm{otherwise.} 
\end{cases}\]
\end{definition}
\begin{lemma}\label{lem:uzpow} If $\wy$ is not $(\uz,\cX)$-overwritable, $\cst>0$.
\end{lemma}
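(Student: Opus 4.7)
The plan is to exhibit an authentication tag of positive rate for $(\wy,\uz)$ and conclude $\cst>0$ via Theorem~\ref{thm:dichotomy}. I would first unpack the hypothesis: negating \appdef{def:upow} yields a pair $(P_X,x')\in\mathcal{P}(\cX)\times\cX$ such that every $\Probc_{S|Z}$ produces an induced output distribution distinct from $\wy(\cdot|x',\szero)$. Because the set of kernels $\Probc_{S|Z}$ is compact and total variation is continuous, this lifts to a quantitative margin: there is $\nu>0$ with
\[\min_{\Probc_{S|Z}}\mathbb{V}\bigl(\expect{X,Z,S}{\wy(\cdot|X,S)},\wy(\cdot|x',\szero)\bigr)>\nu,\]
under the joint $P_X\uz\Probc_{S|Z}$; this is precisely the constant $\kappa(P_X,x')$ appearing in \appeq{eq:kappa}.

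Next I would invoke the overlapping-sets lemma \applem{lem:overlappingsets} to produce a family $\mathfrak{B}=\{\cB_m:m\in[N]\}$ with $|\cB_m|=\Theta(n)$, $|\cB_{\hat{m}}\setminus\cB_m|\geq\alpha n$ for some constant $\alpha>0$ whenever $m\neq\hat{m}$, and $N$ doubly exponential in $n$, so the tag below attains positive rate. The tag itself is the $(P_X,n,\mathfrak{B},\rho)$-authentication tag of \appdef{def:authtag} with threshold $\rho:=\alpha\nu/4$.

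For the false alarm event, sending $m$ with no adversary makes $X_i=x'$ for every $i\in\cB_m$, so $Y_i\sim\wy(\cdot|x',\szero)$ i.i.d.\ on $\cB_m$; a Hoeffding bound per output letter gives $\mathbb{V}(\hat{P}_{Y^n}(\cdot|\cB_m),\wy(\cdot|x',\szero))<\rho$ with probability exceeding $1-\lambda_1$ for $n$ large, and $\phi_m$ accepts. For missed detection, fix $m\neq\hat{m}$ and any $\Probc_{S^n|Z^n}$, and decompose $\cB_{\hat{m}}=(\cB_m\cap\cB_{\hat{m}})\sqcup\cD$ with $\cD:=\cB_{\hat{m}}\setminus\cB_m$. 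On the overlap the transmitter is already sending $x'$, so the best the adversary can do is reproduce $\wy(\cdot|x',\szero)$; on $\cD$, by contrast, $(X_i,Z_i)$ is i.i.d.\ with $X_i\sim P_X$, and for each $i\in\cD$ the marginal kernel $\Probc_{S_i|Z_i}$ induced by the adversary belongs to the set over which the $\nu$-gap above holds, so the marginal $Y_i$-distribution lies at TV distance larger than $\nu$ from $\wy(\cdot|x',\szero)$. Averaging, the expected empirical on $\cB_{\hat{m}}$ sits at distance at least $\alpha\nu>2\rho$ from $\wy(\cdot|x',\szero)$, and any sufficiently tight concentration of $\hat{P}_{Y^n}(\cdot|\cB_{\hat{m}})$ around its expectation forces $\phi_{\hat{m}}$ to reject.

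The main obstacle is precisely this last concentration step on $\cD$: even though $(X_\cD,Z_\cD)$ is i.i.d., the adversary is free to couple $S_\cD$ across coordinates through $\Probc_{S^n|Z^n}$, so the $Y_i$ on $\cD$ are not independent. I would resolve this with a bounded-differences (McDiarmid) inequality applied to $\hat{P}_{Y^n}(\cdot|\cB_{\hat{m}})$ viewed as a function of the i.i.d.\ observations $Z^n$, noting that a single-coordinate change of $Z^n$ perturbs the empirical by $O(1/n)$; alternatively, a method-of-types union bound over the polynomially many joint types on $\cD$ whose $Y$-marginal lies within $\rho$ of $\wy(\cdot|x',\szero)$ works, and is essentially the content of \applem{lem:typicality}. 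Either way, the per-pair $(m,\hat{m})$ error is exponentially small in $n$, both target error probabilities of the tag are met, and $\Ctag>0$ then yields $\cst>0$ via Theorem~\ref{thm:dichotomy}.
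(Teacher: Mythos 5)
Your proposal follows essentially the same route as the paper's proof: quantify non-$(\uz,\cX)$-overwritability by compactness into a positive margin $\mu$ (the paper's Eq.~\eqref{eq:nonupow}), build the $(P_X,n,\mathfrak{B},\mu/4)$-tag from Definition~\ref{def:authtag} over the overlapping-sets family of Lemma~\ref{lem:overlappingsets}, handle false alarm by i.i.d.\ concentration on $\cB_m$, reduce the missed-detection statistic from $\cB_{\hat m}$ to $\cB_{\hat m}\setminus\cB_m$ exactly as in Property~5) of Lemma~\ref{lem:overlappingsets}, and finish with the method-of-types concentration of Lemma~\ref{lem:typicality}. Two cautions: first, your McDiarmid alternative does not work as stated, because $\hat P_{Y^n}(\cdot|\cB_{\hat m})$ is not a bounded-differences function of $Z^n$ --- the adversary's kernel $\Probc_{S^n|Z^n}$ is arbitrary, so changing a single $Z_i$ can change all of $S^n$ and hence perturb the empirical distribution by $O(1)$, which is precisely why the paper conditions on $(z^n,s^n)$ and runs a type-class argument instead. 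Second, your ``averaging'' step (each marginal $Y_i$-law on $\cD$ is $\nu$-far, hence so is their average) implicitly uses that the set of achievable output laws $\sum_{x,z,s}P_X\uz\Probc_{S|Z}\wy$ is convex in $\Probc_{S|Z}$; that is true here, but it should be said, and note that what Lemma~\ref{lem:typicality} actually controls is concentration of the empirical type around a $(z^n,s^n)$-dependent distribution $Q_Y$, not around the unconditional expectation.
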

\begin{proof}
Following along a similar reasoning in Remark~\ref{rem:nonowchar}, we note that if $\wy$ is not  $(\uz,\cX)$-overwritable, there exist $P_X\in\mathcal{P}(\cX)$ and $x'\in\cX$ and $\mu>0$ such that 
\begin{equation}
\min_{\Probc_{S|Z}}\mathbb{V}\left(\expect{X,Z,S}{\wy(\cdot|X,S)},\wy(\cdot|x',\szero)\right)> \mu.\label{eq:nonupow}\end{equation}
We invoke the Theorem~\ref{thm:dichotomy} to note that it is sufficient to show the existence of authentication tags of positive rate for large enough blocklengths. To this end, let $(\lambda_1,\lambda_2)$ be desired false alarm and missed detection probabilities for the authentication tag. Let $\alpha,\beta>0$ be small enough so as to satisfy
\begin{equation}
	\frac{1+\beta}{(1-\alpha)(1-\beta)}\left(\mu/4+2\frac{\alpha(1+\beta)}{1-\beta}\right)\leq \frac{\mu}{2}.\label{eq:alphabeta}
\end{equation} Pick $\mathfrak{B}=\set{\cB_m:m\in[N]}$ as per Lemma~\ref{lem:overlappingsets}. Consider a $(P_X,n,\mathfrak{B},\mu/4)$-tag $(P_{X^n|M},\set{\phi_m})$ as given in Definition~\ref{def:authtag}.
We first analyze the false alarm probability.  Note that 
\begin{align}
&\Pr_{X^n,Y^n}(\authdec_m(Y^n)=\reject\vert m\mbox{ sent},S^n=\szero^n)	\\
& = \Pr_{X^n,Y^n}\left(\mathbb{V}\left(\hat{P}_{Y^n}(\cdot|\cB_m),\wy(\cdot|x',\szero\right)>\mu/4\right)\\
&= \Pr_{Y_{\cB_m}}\left(\mathbb{V}\left(\hat{P}_{Y^n}(\cdot|\cB_m),\wy(\cdot|x',\szero\right)>\mu/4 \big| x_{\cB_m}=(x')^{|\cB_m|}\right).
\end{align}
Now, noting that for each $i\in\cB_m$, $x_i=x'$, and hence, $Y_i\sim \wy(\cdot|x',\szero)$, the probability of false rejection is simply the probability that a sequence of length $|\cB_m|$ is not typical when each symbol in the sequence is drawn i.i.d. from $\wy(\cdot|x',\szero)$. Thus, by Weak Law of Large Numbers, for $n$ large enough, the false alarm probability for our construction is smaller than $\lambda_1$.

Now, we analyze the missed detection probability. To this end, note that, for any $m$, $\hat{m}$ s.t. $\hat{m}\neq m$ and any adversarial strategy $Q_{S^n|Z^n}$, we have,  
\begin{align} 
&	\Pr_{X^n,Y^n,Z^n,S^n}(\authdec_{\hat{m}}(Y^n)=\accept\vert m\mbox{ sent})\\
& = \Pr_{X^n,Y^n,Z^n,S^n}\left(\mathbb{V}\left(\hat{P}_{Y^n}(\cdot|\cB_{\hat{m}}),\wy(\cdot|x',\szero\right)<\mu/4\right)\\
& \stackrel{(a)}{\leq} \Pr_{X^n,Y^n,Z^n,S^n}\Bigg[\mathbb{V}\left(\hat{P}_{Y^n}(\cdot|\cB_{\hat{m}}\setminus\cB_{m}),\wy(\cdot|x',\szero\right)\\
&\qquad\qquad\qquad\qquad<\frac{1+\beta}{(1-\alpha)(1-\beta)}\left(\mu/4+2\frac{\alpha(1+\beta)}{1-\beta}\right)\Bigg]\\
& \stackrel{(b)}{\leq}  \Pr_{X^n,Y^n,Z^n,S^n}\Bigg[\mathbb{V}\left(\hat{P}_{Y^n}(\cdot|\cB_{\hat{m}}\setminus\cB_{m}),\wy(\cdot|x',\szero\right)<\frac{\mu}{2}\Bigg]\label{eq:lastprob}.
\end{align}
In the above, over all candidate messages $\hat{m}$ from the first phase)$(a)$ follows from Property 5) of Lemma~\ref{lem:overlappingsets}. $(b)$ follows from the condition in~\eqref{eq:alphabeta}. Finally, we note that for all $i\in\cB_{\hat{m}}\setminus\cB_m$, $X_i$ is drawn i.i.d. from $P_X$. Thus, application of Lemma~\ref{lem:typicality} gives that the probability in~\eqref{eq:lastprob} is bounded from above by $2^{-n\alpha(1-\alpha)(1+\beta)\gamma}$ for some $\gamma>0$. This proves that the missed detection probability is smaller than $\lambda_2$ for $n$ large enough.\end{proof}
\newcommand{\tc}{\mathsf{T}}
\newcommand{\Probv}{V}

For any $P_X\in\mathcal{P}(\cX)$ and $x'\in\cX$, let
\begin{align}
& \kappa(P_X,x')\triangleq \\& \min_{Q_{S|Z}}\mathbb{V}\left(\sum_{x,z,s}P_X(x)\uz(z|x)Q_{S|Z}(s|z)\wy(\cdot|x,s),\wy(\cdot|x',\szero)\right). \label{eq:kappa}
\end{align}
\begin{lemma}\label{lem:typicality} Let $(X^n,Y^n,Z^n,S^n)$ be drawn from  $\prod_i P_{X}(\cdot)\prod_{i}\uz (\cdot|\cdot) Q_{S^n|Z^n}\prod_{i}\wy(\cdot|\cdot,\cdot)$.  Then, there exists $\gamma>0$, such that for $n$ large enough and for all $\Probc_{S^n|Z^n}$, 
\[\Pr(\mathbb{V}(\hat{P}_{Y^n},\wy(\cdot|x',\szero))<\kappa(P_X,x')/2)<2^{-n\gamma}.\]
\end{lemma}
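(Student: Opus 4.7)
My plan is to show that $\hat P_{Y^n}$ concentrates exponentially in variational distance around a data-dependent distribution $\bar P_{\hat Q}$ which, by the very definition of $\kappa(P_X,x')$ in~\eqref{eq:kappa}, lies at distance at least $\kappa(P_X,x')$ from $\wy(\cdot|x',\szero)$ for every realization of the randomness. A triangle inequality will then yield $\mathbb{V}(\hat P_{Y^n},\wy(\cdot|x',\szero))\geq \kappa(P_X,x')/2$ with the required probability. Concretely, for $z$ with $\hat P_{Z^n}(z)>0$ set $\hat Q(s|z)\triangleq \hat P_{S^n|Z^n}(s|z)$ (extended arbitrarily otherwise) and define
\[\bar P_{\hat Q}(y)\triangleq\sum_{x,z,s}P_X(x)\uz(z|x)\hat Q(s|z)\wy(y|x,s).\]
Since $\hat Q$ is a legitimate stochastic matrix $\cZ\to\cS$ for every realization, \eqref{eq:kappa} gives $\mathbb{V}(\bar P_{\hat Q},\wy(\cdot|x',\szero))\geq\kappa(P_X,x')$ pathwise, and it suffices to prove $\Pr(\mathbb{V}(\hat P_{Y^n},\bar P_{\hat Q})>\kappa(P_X,x')/2)\leq 2^{-n\gamma}$.

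Next I would reveal the randomness in three stages, exploiting conditional independence at each layer. First, conditional on $(X^n,S^n)$ the outputs $Y_i\sim\wy(\cdot|X_i,S_i)$ are independent, so Hoeffding and a union bound over $\cY$ give $\max_y|\hat P_{Y^n}(y)-\sum_{x,s}\hat P_{X^n,S^n}(x,s)\wy(y|x,s)|<\epsilon$ with exponentially small failure probability. Second, conditional on $(Z^n,S^n)$ the coordinates of $X^n$ are independent with $X_i\mid Z_i=z\sim P_X(\cdot)\uz(z|\cdot)/P_Z(z)$ (because $S^n$ is a function of $Z^n$ and independent adversary randomness); applying Hoeffding to $\mathbbm{1}[X_i=x,S_i=s]=\mathbbm{1}[S_i=s]\,\mathbbm{1}[X_i=x]$ shows that $\hat P_{X^n,S^n}(x,s)$ is within $\epsilon$ of $\sum_z\hat P_{Z^n,S^n}(z,s)P_{X|Z}(x|z)$ uniformly in $(x,s)$, also with exponentially small failure probability. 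Third, since $Z_1,\ldots,Z_n$ are i.i.d.\ $P_Z$, Hoeffding gives $\max_z|\hat P_{Z^n}(z)-P_Z(z)|<\epsilon$. Substituting $P_{X|Z}(x|z)=P_X(x)\uz(z|x)/P_Z(z)$ and $\hat P_{Z^n,S^n}(z,s)=\hat P_{Z^n}(z)\hat Q(s|z)$ into the second approximation, and replacing $\hat P_{Z^n}(z)$ by $P_Z(z)$ via the third, the target of the second step collapses to $P_X(x)\sum_z\uz(z|x)\hat Q(s|z)$. Chaining all three layers yields $|\hat P_{Y^n}(y)-\bar P_{\hat Q}(y)|=O(\epsilon)$ uniformly in $y$; picking $\epsilon$ small enough that the accumulated variational error is below $\kappa/2$ and taking a union bound over the three failure events furnishes the desired $\gamma>0$.

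The main obstacle is that an arbitrary $Q_{S^n|Z^n}$ can couple the coordinates of $S^n$, and hence of $Y^n$, in essentially arbitrary ways: changing a single $X_i$ alters $Z_i$ and the adversary's non-memoryless function may propagate this change to every $S_j$, so a naive McDiarmid-style bounded-differences estimate on $\hat P_{Y^n}$ as a function of the raw input and channel randomness gives only trivial per-coordinate sensitivities. The key conceptual step is to condition on $(Z^n,S^n)$ first, which absorbs all of the adversary's correlation structure into the random conditional $\hat Q$; because $\hat Q$ remains, pathwise, a valid memoryless channel from $\cZ$ to $\cS$, it belongs to the class over which $\kappa(P_X,x')$ is defined as a minimum, while the residual randomness driving $\hat P_{Y^n}$ (the i.i.d.\ $Z^n$, the conditionally-independent $X^n\mid Z^n$, and the conditionally-independent $Y^n\mid(X^n,S^n)$) is amenable to standard Hoeffding estimates.
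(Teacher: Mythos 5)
Your proposal is correct, and its central object coincides with the paper's: your $\bar P_{\hat Q}$, built from the empirical conditional $\hat Q=\hat P_{S^n|Z^n}$, is exactly the distribution $\tilde Q_Y=\sum_{x,z,s}P_X(x)\uz(z|x)Q_{S|Z}(s|z)\wy(\cdot|x,s)$ that the paper extracts, and both arguments exploit the same key observation that this empirical conditional is, pathwise, a feasible single-letter strategy in the minimization defining $\kappa(P_X,x')$, so the triangle inequality finishes the job. Where you differ is in the concentration machinery. The paper conditions on the joint type of $(z^n,s^n)$, bounds the probability that the joint type of $(Y^n,z^n,s^n)$ deviates from $V_{Y|ZS}Q_{ZS}$ using the method-of-types estimate $2^{-nD(\cdot\|\cdot)}$ plus polynomial type-counting, controls $D(Q_Z\|P_Z)$ similarly, and converts divergences to variational distance via Pinsker. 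You instead peel off the randomness in three conditional layers ($Y^n$ given $(X^n,S^n)$, then $X^n$ given $(Z^n,S^n)$, then $Z^n$), applying Hoeffding at each layer; the crucial structural facts you rely on --- that $Y_i$ are conditionally i.i.d.\ given $(X^n,S^n)$, and that $X^n$ is conditionally independent across coordinates given $(Z^n,S^n)$ with law $P_{X|Z}(\cdot|Z_i)$ because of the Markov chain $X^n-Z^n-S^n$ --- are both valid, and your bounds are uniform over $Q_{S^n|Z^n}$ as required. Your route is arguably more elementary (no type-counting, no Pinsker) and makes the ``absorb the adversary's correlations into $\hat Q$'' step more transparent; the paper's route is more compact given that the method-of-types lemmas are imported wholesale from Csisz\'ar--K\"orner. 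One small point to make explicit when writing it up: in the substitution step, replacing $\hat P_{Z^n}(z)$ by $P_Z(z)$ is harmless because the factor $P_{X|Z}(x|z)\hat Q(s|z)\le 1$ multiplies the difference, so no division by a small $P_Z(z)$ ever occurs.
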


\begin{proof} 
\emph{Notation:} In the following, we borrow notation from~\cite[Chapter~1]{CK11}. We let $\cP^{(n)}=\cP^{(n)}(\cY\times\cZ\times\cS)$ be the set of all types on $\cY\times\cZ\times\cS$. For a joint type $\Prob\in\cP^{(n)}$, let $\tc_\Prob$ denotes the corresponding type class, \emph{i.e.}, all sequences whose empirical distribution is $\Prob$. Similarly, we define the set of types and type classes for all other combinations random variables.  

%For sets $\cA_1,\cA_2$, let $\cP^{(n)}(\cA_1)$ (resp. $\cP^{(n)}(\cA_1\times \cA_2)$) denote the set of types (resp. joint types) of $n$-length sequences in $(\cA_1)^n$ (resp. $(\cA_1)^n\times (\cA_2)^n$).  For a type $\Prob\in\cP^{(n)}(\cA_1)$, let $\mathsf{T}_{\Prob}$ denote the type class of $\Prob$. For $(a_1)^n\in(\cA_1)^n$ (resp.$((a_1)^n,(a_2)^n)\in(\cA_1)^n\times (\cA_2)^n$), let $P_{(a_1)^n}$ (resp. $P_{(a_1)^n,(a_2)^n}$) denote the type of $(a_1)^n$ (resp. the joint type of $(a_1)^n,(a_2)^n$.

Suppose that $(z^n,s^n)\in \tc_{Q_{ZS}}$ for some $\Probc_{ZS}\in\cP^{(n)}(\cZ\times \cS)$. Let $V_{Y|ZS}$ be the conditional probability of $Y$ given $(Z,S)$ when $X$ is drawn as per $P_X$ and $S$ is drawn as per $Q_{S|Z}$, \emph{i.e.},\[V_{Y|ZS}(y|zs)=\frac{\sum_{x}P_X(x)\uz(z|x)Q_{S|Z}(s|z)\wy(y|x,s)}{\sum_{x}P_X(x)\uz(z|x)Q_{S|Z}(s|z)}.\]   Let $Q_{YZS}=V_{Y|ZS}Q_{ZS}$ and let $Q_Y(=Q_Y^{(z^n,s^n)})$ be the marginal distribution induced by $Q_{YZS}$ on $\cY$. Note that this distribution is completely determined by the joint type of $(z^n,s^n)$, the given channel conditional probabilities, and the given input distribution. 

\emph{Proof overview:} The proof of this lemma proceeds by first showing in~\eqref{eq:jttype} that, for any $(z^n,s^n)$,  the joint type of $(Y^n,z^n,s^n)$ is close to $Q_{YZS}=V_{Y|ZS}Q_{SZ}$,  with high probability over the generation of $Y^n$. Next, we show that, with a high probability over $Z^n$, the joint type of $(Y^n,Z^n,S^n)$ is close to the single letter distribution $V_{Y|ZS}P_ZQ_{S|Z}$, where $P_Z$ is the probability distribution of the random variable $Z$ when $X$ is drawn from $P_X$ and $Z$ is the output of the channel $\uz$. This allows us to bound the probability that the variational distance in the lemma statement exceeds $\kappa(P_X,x')/2$ by a similar expression in terms of the type $Q_Y$ of $Y$ (Eq.~\eqref{eq:lastbigeqn}). Finally, we note that $Q_Y$ approximately satisfies the form required in the definition~\eqref{eq:kappa} to conclude that the lemma statement holds.  

\emph{Proof details:} Let $\kappa=\kappa(P_X,x')$. Let $(z^n,s^n)\in \tc_{Q_{ZS}}$ for some $\Probc_{ZS}\in\cP^{(n)}(\cZ\times \cS)$ and let $Y^n\sim\prod_i\Probv_{Y|Z,S}(\cdot|z_i,s_i)$. Following~\cite[Lemma~2.6]{CK11}, we first note that, for all $\Probc'_{YZS}\in\cP^{(n)}(\cY\times\cZ\times \cS)$ such that $ \Probc'_{ZS}=\Probc_{ZS}$,\[\Pr((Y^n,z^n,s^n)\in \tc_{Q'_{YZS}})\leq 2^{-nD(\Prob'_{Y|ZS}||\Probv_{Y|ZS})|\Prob_{ZS})}.\]
	Thus, for the given $(z^n,s^n)$, and $\eta>0$,
	\begin{align}
	& \Pr( D({\hat{P}}_{Y^n,z^n,s^n}||\Probv_{Y|ZS}\Probc_{ZS})>\eta    | z^n,s^n) \\
	\quad	& = \sum_{\substack{\Probc'_{YZS}\in\cP^{(n)}(\cY\times\cZ\times\cS)\\
	s.t.\ \Probc'_{ZS}=\Probc_{ZS}\\ D(\Probc'_{YZS}||V_{Y|ZS}Q_{ZS})>\eta}}\Pr((Y^n,z^n,s^n)\in \tc_{P'_{YZS}})\\
	\quad & \leq \sum_{\substack{\Probc'_{YZS}\in\cP^{(n)}(\cY\times\cZ\times\cS)\\
	s.t.\ \Probc'_{ZS}=\Probc_{ZS}\\ D(\Probc'_{YZS}||V_{Y|ZS}Q_{ZS})>\eta}} e^{-nD(\Probc'_{Y|ZS}||\Probv_{Y|ZS}|\Probc_{ZS})}\\
	&  \leq (n+1)^{|\cY|} 2^{-n\eta} < 2^{-n\eta/2}.\label{eq:jttype}
	\end{align}
Now, by Pinsker's inequality, whenever $D(\hat{P}_{Y^n,z^n,s^n}||\Probc_{YZS})<\eta$, we have $\mathbb{V}(\hat{P}_{Y^n,z^n,s^n},\Probc_{YZS}))<\sqrt{2\eta}$. Note that 

\begin{align}
\mathbb{V}(\hat{\Prob}_{Y^n},\wy(\cdot|x',\szero))&\geq 	\mathbb{V}(Q_Y,\wy(\cdot|x',\szero))-\mathbb{V}(\hat{\Prob}_{Y^n},Q_Y)\\
& \geq  	\mathbb{V}(Q_Y,\wy(\cdot|x',\szero))\\
&\qquad\qquad\qquad-\mathbb{V}(\hat{P}_{Y^n,z^n,s^n},Q_{YZS})\\
& \geq  	\mathbb{V}(Q_Y,\wy(\cdot|x',\szero))-\sqrt{2\eta}.\label{eq:pinsker1}\end{align}
Let $P_Z$ be the distribution of each $Z_i$ under the conditions of the lemma, \emph{i.e.} $P_Z(\cdot)=\sum_{x\in\cX}P_X(x)\uz(\cdot|x)$. We have,
\begin{align}
&\Pr\bigg(\mathbb{V}(\hat{P}_{Y^n},\wy(\cdot|x',\szero))<\frac{\kappa}{2}\bigg)	\\
&= \sum_{z^n,s^n}\bigg[\Pr\bigg(\mathbb{V}(\hat{P}_{Y^n},\wy(\cdot|x',\szero))<\frac{\kappa}{2}\bigg|z^n,s^n\bigg)\\ 
&\qquad\times \prod_i\Prob_Z(z_i) Q_{S^n|Z^n}(s^n|z^n)\bigg] \\
& = \sum_{Q_{ZS}\in\cP^{(n)}} \sum_{(z^n,s^n)\in \tc_{Q_{ZS}}}\bigg[\Pr\bigg(\mathbb{V}(\hat{P}_{Y^n},\wy(\cdot|x',\szero))<\frac{\kappa}{2}\bigg|z^n,s^n\bigg) \\
&\qquad\times \prod_i\Prob_Z(z_i) Q_{S^n|Z^n}(s^n|z^n)\bigg] \\
& \stackrel{(a)}{\leq} \sum_{Q_{ZS}\in\cP^{(n)}} \sum_{\substack{(z^n,s^n)\\ \;\;\in \tc_{Q_{ZS}}}}\bigg[\Pr\bigg(\mathbb{V}(Q_Y,\wy(\cdot|x',\szero))-\sqrt{2\eta}<\frac{\kappa}{2}\bigg|z^n,s^n\bigg) \\
&\qquad\times \prod_i\Prob_Z(z_i) Q_{S^n|Z^n}(s^n|z^n)\bigg]+2^{-n\eta/2} \\
& \stackrel{(b)}{\leq} \sum_{Q_{ZS}\in\cP^{(n)}} \sum_{\substack{(z^n,s^n)\\ \;\;\in \tc_{Q_{ZS}}}}\bigg[\Pr\bigg(\mathbb{V}(Q_Y,\wy(\cdot|x',\szero))-\sqrt{2\eta}<\frac{\kappa}{2}\bigg|z^n,s^n\bigg) \\
&\qquad\times 2^{-nD(Q_Z||P_Z)-nH(Q_Z)} Q_{S^n|Z^n}(s^n|z^n)\bigg]+2^{-n\eta/2} \\
& \stackrel{(c)}{\leq} \sum_{\substack{Q_{ZS}\in\cP^{(n)}\\D(Q_Z||P_Z)<\xi}} \sum_{\substack{(z^n,s^n)\\ \;\;\in \tc_{Q_{ZS}}}}\bigg[\Pr\bigg(\mathbb{V}(Q_Y,\wy(\cdot|x',\szero))-\sqrt{2\eta}<\frac{\kappa}{2}\bigg|z^n,s^n\bigg) \\
&\qquad\times 2^{-nD(Q_Z||P_Z)-nH(Q_Z)} Q_{S^n|Z^n}(s^n|z^n)\bigg]+2^{-n\xi/2}+2^{-n\eta/2}. \label{eq:lastbigeqn}
\end{align}

In the above, $(a)$ follows from~\eqref{eq:pinsker1}. $(b)$ follows from the method of types. To obtain $(c)$, we bound the probability contribution from all $z^n$ in type classes with $D(Q_Z||P_Z)>\xi$ by $2^{-n\xi/2}$. Let $\tilde{Q}_Y$ be the marginal distribution on $\cY$ induced by the joint distribution $P_ZQ_{S|Z}V_{Y|ZS}$. Note that $\tilde{Q}_Y$ is of the form
\begin{align}
	\tilde{Q}_Y(y)&=\sum_{z,s}P_Z(z)Q_{S|Z}(s|z) V_{Y|ZS}(y|z,s)\\
	& = \sum_{x,z,s}P_X(x)\uz(z|x)Q_{S|Z}(s|z) W_{Y|XS}(y|x,s)\label{eq:tildeq}.
\end{align}

Now, note that  for all $Q_{S|Z}$, $D(Q_Z||P_Z)<\xi$ implies that $\mathbb{V}(Q_Y,\tilde{Q}_Y)<\sqrt{2\xi}$, since 
\[D(Q_Y||\tilde{Q}_Y)\leq D(Q_ZQ_{S|Z}V_{Y|ZS}||P_ZQ_{S|Z}V_{Y|ZS}) = D(Q_Z||P_Z), \] 
and Pinsker's inequality gives us $\mathbb{V}(Q_Y,\tilde{Q}_Y)\leq \sqrt{2D(Q_Y||\tilde{Q}_Y)}$. As a consequence, we have, for all $Q_Y$ that obtained from a joint distribution $Q_ZQ_{S|Z}V_{Y|ZS}$ with $D(Q_Z||P_Z)<\xi$, we have, \[\mathbb{V}(Q_Y,\wy(\cdot|x',\szero))\geq \mathbb{V}(\tilde{Q}_Y,\wy(\cdot|x',\szero))-\sqrt{2\xi}.\]
Applying this bound to~\eqref{eq:lastbigeqn} gives
\begin{align}
&\Pr\bigg(\mathbb{V}(\hat{P}_{Y^n},\wy(\cdot|x',\szero))<\frac{\kappa}{2}\bigg)\\
& \leq  \Pr\bigg(\mathbb{V}(\tilde{Q}_Y,\wy(\cdot|x',\szero))<\sqrt{2\xi}+\sqrt{2\eta}+\frac{\kappa}{2}\bigg) +2^{-n\xi/2}+2^{-n\eta/2}.
\end{align}
Finally, comparing~\eqref{eq:tildeq} with~\eqref{eq:kappa}, we note that $\tilde{Q}_Y$  satisfies $\mathbb{V}(\tilde{Q}_Y,\wy(\cdot|x',\szero))\geq \kappa$. Thus, as long as $\sqrt{2\xi}+\sqrt{2\mu}<\mu/2$, the conclusion of the lemma follows with $\gamma=\min\set{\xi,\eta}/2$.

\end{proof}

The following corollary is useful in the proof of Theorem~\ref{thm:achievability}.
\begin{corollary}
	\label{cor:typicality} Suppose that $(\wy,\uz)$ satisfy~\eqref{eq:nonupow}. Let $(X^n,Y^n,Z^n,S^n)\sim \prod_i P_{X}(\cdot)\prod_{i}\uz (\cdot|\cdot) Q_{S^n|Z^n}\prod_{i}\wy(\cdot|\cdot,\cdot)$.  Then, there exists $\gamma>0$, such that for $n$ large enough and for all $\Probc_{S^n|Z^n}$, 
\[\Pr(\mathbb{V}(\hat{P}_{Y^n},\wy(\cdot|x',\szero))<\mu/2)<2^{-n\gamma}.\]
\end{corollary}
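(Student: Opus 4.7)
The plan is to derive this corollary as an almost immediate consequence of Lemma~\ref{lem:typicality}, essentially by identifying the threshold $\mu/2$ here with the threshold $\kappa(P_X,x')/2$ there under the hypothesis~\eqref{eq:nonupow}.

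First, I would unpack what~\eqref{eq:nonupow} gives us in terms of the quantity $\kappa(P_X,x')$ defined in~\eqref{eq:kappa}. Reading the two displays side by side, $\kappa(P_X,x')$ is exactly the min--over--$Q_{S|Z}$ of the variational distance appearing inside~\eqref{eq:nonupow}. Hence the hypothesis~\eqref{eq:nonupow} of the corollary states precisely that $\kappa(P_X,x') > \mu$ for the specific pair $(P_X,x')$ that we have selected (the one achieving the maximum in Remark~\ref{rem:nonowchar}). In particular, $\kappa(P_X,x')/2 > \mu/2$.

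Next, I would use this inequality for a simple set--containment argument. Since $\kappa(P_X,x')/2 > \mu/2$, the event
\[\{\mathbb{V}(\hat{P}_{Y^n},\wy(\cdot|x',\szero))<\mu/2\}\subseteq \{\mathbb{V}(\hat{P}_{Y^n},\wy(\cdot|x',\szero))<\kappa(P_X,x')/2\}.\]
This containment holds for every realization, and in particular for every adversarial strategy $Q_{S^n|Z^n}$, because the two events differ only in the deterministic choice of threshold. Taking probabilities and invoking Lemma~\ref{lem:typicality} on the right-hand event then yields the bound $2^{-n\gamma}$ uniformly over $Q_{S^n|Z^n}$, which is exactly the claim of the corollary.

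There is really no obstacle here beyond the bookkeeping of matching notation: the corollary is a cosmetic restatement of Lemma~\ref{lem:typicality} that swaps the implicit constant $\kappa(P_X,x')/2$ for the more convenient constant $\mu/2$ that appears directly in the hypothesis used downstream (in particular in the proof of Theorem~\ref{thm:achievability}, where $\mu$ is the quantity supplied by Remark~\ref{rem:nonowchar} via~\eqref{eq:nonow}). The value of $\gamma$ inherited from Lemma~\ref{lem:typicality} carries over without change, and no additional concentration argument is needed.
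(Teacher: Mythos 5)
Your proposal is correct and is essentially the paper's own argument: the paper simply says the corollary "follows directly from Lemma~\ref{lem:typicality} by replacing $\kappa$ by $\mu$," and your monotonicity/set-containment step (noting that \eqref{eq:nonupow} means $\kappa(P_X,x')>\mu$, so the event with threshold $\mu/2$ is contained in the event with threshold $\kappa(P_X,x')/2$) is exactly what that replacement amounts to. The only nit is attribution: the pair $(P_X,x')$ and $\mu$ in \eqref{eq:nonupow} come from the non-$(\uz,\cX)$-overwritability hypothesis in Lemma~\ref{lem:uzpow}, not from the maximum in Remark~\ref{rem:nonowchar}, but this does not affect the argument.
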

\begin{proof}
The above follows directly from Lemma~\ref{lem:typicality} by replacing $\kappa$ by $\mu$.	
\end{proof}

\begin{lemma} \label{lem:overlappingsets}Let $n\in\mathbb{N}$ and $\alpha,\beta\in(0,1)$. Let $R<\min\set{\beta^2\alpha^2/6,\beta^2\alpha(1-\alpha)/4}$. Then, there exists a family  $\mathfrak{B}$ of subsets of $[t]$ satisfying:
\begin{enumerate}
	\item $ \alpha (1-\beta) n\leq|\cB|\leq \alpha (1+\beta) n$ for all $\cB\in\mathfrak{B}$,
	\item $|\cB\cap\cB'|<\alpha^2(1+\beta) n$ for all $\cB,\cB'\in\mathfrak{B}$,
	\item $|\cB'\setminus\cB|>\alpha (1-\alpha)(1+\beta) n$ for all $\cB,\cB'\in\mathfrak{B}$, and
	\item $|\mathfrak{B}|\geq 2^{R n}$.
	\item For every $y^n\in\cY^n$, \begin{align}
	&\mathbb{V}\left(\hat{P}_{y^n}(\cdot|\cB'),\wy(\cdot|x',\szero)\right)\\
	& \geq  \frac{(1-\alpha)(1-\beta)}{1+\beta}\mathbb{V}\left(\hat{P}_{y^n}(\cdot|\cB'\setminus\cB),\wy(\cdot|x',\szero)\right)\\
	&\qquad\qquad\qquad-2 \frac{\alpha(1+\beta)}{(1-\beta)}.
\end{align}
\end{enumerate}
\end{lemma}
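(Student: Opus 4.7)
The plan is to construct the family $\mathfrak{B}$ by the probabilistic method: independently for each $m \in [2^{Rn}]$ and each $i \in [n]$, include $i$ in $\cB_m$ with probability $\alpha$. Under this measure, $\mathbb{E}[|\cB_m|] = \alpha n$, $\mathbb{E}[|\cB_m \cap \cB_{m'}|] = \alpha^2 n$ (for $m \neq m'$), and $\mathbb{E}[|\cB_{m'} \setminus \cB_m|] = \alpha(1-\alpha) n$. I would then apply Chernoff--Hoeffding bounds to each of properties 1, 2, 3 separately. In particular, each violation event for a single set (property 1) or pair (properties 2, 3) has probability at most $\exp(-c_i \beta^2 n)$ for constants $c_1 \sim \alpha$, $c_2 \sim \alpha^2$, $c_3 \sim \alpha(1-\alpha)$ absorbed into the denominators $6$ and $4$ in the hypothesis on $R$.

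Next, I would take a union bound across the $|\mathfrak{B}| = 2^{Rn}$ individual sets (for property 1) and the $\binom{|\mathfrak{B}|}{2} \leq 2^{2Rn}$ pairs (for properties 2 and 3). The hypothesis $R < \min\{\beta^2\alpha^2/6,\; \beta^2\alpha(1-\alpha)/4\}$ is calibrated precisely so that $2^{2Rn} \cdot \exp(-c_i \beta^2 n)$ still tends to $0$; thus with positive probability all three size conditions hold simultaneously across the entire family, certifying the existence of such a $\mathfrak{B}$.

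For property 5, I would proceed deterministically from properties 1--3. Write $n' = |\cB'|$, $n'' = |\cB'\setminus\cB|$, $n''' = |\cB\cap\cB'|$, so that $n' = n'' + n'''$ and
\begin{equation}
\hat{P}_{y^n}(\cdot\mid\cB') = \frac{n''}{n'}\hat{P}_{y^n}(\cdot\mid\cB'\setminus\cB) + \frac{n'''}{n'}\hat{P}_{y^n}(\cdot\mid\cB\cap\cB').
\end{equation}
Letting $Q = \wy(\cdot\mid x',\szero)$, the triangle inequality gives
\begin{equation}
\mathbb{V}(\hat{P}_{y^n}(\cdot\mid\cB'),Q) \geq \tfrac{n''}{n'}\mathbb{V}(\hat{P}_{y^n}(\cdot\mid\cB'\setminus\cB),Q) - \tfrac{n'''}{n'}\mathbb{V}(\hat{P}_{y^n}(\cdot\mid\cB\cap\cB'),Q).
\end{equation}
The second $\mathbb{V}$ is at most $2$ trivially, and properties 1--3 bound the ratios: $n''/n' \geq (1-\alpha)(1-\beta)/(1+\beta)$ and $n'''/n' \leq \alpha(1+\beta)/(1-\beta)$, yielding the stated inequality.

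The main obstacle is the book-keeping of the exponents to make the union bound close at exactly the stated rate $R$; the constants $6$ and $4$ in the denominator are the factor of $2$ from counting ordered pairs combined with the $1/3$ (resp.\ $1/2$) from multiplicative Chernoff in the relevant regime. Once that calibration is verified, properties 1--3 are obtained simultaneously via the probabilistic method, and property 5 is a purely algebraic consequence, as above, of the convex decomposition of the empirical measure on $\cB'$.
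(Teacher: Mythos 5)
Your proposal is correct and follows essentially the same route as the paper's proof: a random family with each element included independently with probability $\alpha$, Chernoff plus a union bound over the $2^{Rn}$ sets and $2^{2Rn}$ pairs to get Properties 1--4, and then the convex decomposition of $\hat{P}_{y^n}(\cdot|\cB')$ over $\cB'\setminus\cB$ and $\cB\cap\cB'$ with the triangle inequality and the ratio bounds from Properties 1--3 to get Property 5. (You also implicitly use the lower bound $|\cB'\setminus\cB|\geq\alpha(1-\alpha)(1-\beta)n$ rather than the $(1+\beta)$ appearing in the stated Property 3, which is the version the Chernoff bound actually delivers and the one consistent with the coefficient in Property 5 — the same reading the paper's own proof takes.)
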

\begin{proof} Choose $\mathfrak{B}$ by picking $N$ subsets of $[t]$ such that each  $\cB\in\mathfrak{B}$ by independently including every element of $[t]$ with probability $\alpha$ each. Then, we have, \begin{align}
&\expect{}{|\cB|}=\alpha n,\\
&\expect{}{|\cB\cap\cB'|}=\alpha^2n,\ \mathrm{and}\\
&\expect{}{|\cB\setminus\cB'|}=\alpha(1-\alpha)n
\end{align} 
	for all $\cB,\cB'\in\mathfrak{B}$. Next, we apply Chernoff bound to conclude that 
	\begin{align}
	& \Pr(\left| |\cB|-\alpha n\right|>\alpha\beta n)\leq 2e^{-\beta^2\alpha n/3},\\
	& \Pr(|\cB\cap\cB'|>\alpha^2(1+\beta)n)\leq e^{-\beta^2\alpha^2 n/3}, \ \mathrm{and}\\
	& \Pr(|\cB\setminus\cB'|<\alpha(1-\alpha)(1-\beta)n)\leq e^{-\beta^2\alpha(1-\alpha) n/2}	
	\end{align}
for all $\cB,\cB'\in\mathfrak{B}$. Taking a union bound over all $N$ sets in the first inequality and $N^2$ pairs of sets in the second and third, we see that this random choice of $\mathfrak{B}$ satisfies Properties 1) to 3) with probability at least
$1-Ne^{-\beta^2\alpha n/2}-N^2e^{-\beta^2\alpha^2 n/3}-N^2e^{-\beta^2\alpha(1-\alpha) n/2}$. Thus, as long as $N<e^{Rn}$, our procedure outputs a set $\mathfrak{B}$ satisfying Properties 1) through 3) with high probability. Further, we may let $N=2^{Rn}$, thus also satisfying Property 4). 

Next, given $\cB,\cB'\subseteq [n]$ such that $|\cB'\setminus\cB|>0$, for every $y^n\in\cY^n$ and $y\in\cY$, we have 
\begin{align} 
\hat{P}_{y^n}(y|\cB')&= \frac{\{i\in\cB':y_i=y\}}{|\cB'|}	\\
&= \frac{|\{i\in\cB'\setminus\cB:y_i=y\}|+|\{i\in\cB\cap\cB':y_i=y\}|}{|\cB'|}\\
&= \frac{|\cB'\setminus\cB|}{|\cB'|}\hat{P}_{y^n}(y|\cB'\setminus\cB)+\frac{|\cB\cap\cB'|}{|\cB'|}\hat{P}_{y^n}(y|\cB'\cap\cB).
\end{align}
Thus, by the triangle inequality for variational distance, we have
\begin{align}
	&\mathbb{V}\left(\hat{P}_{y^n}(\cdot|\cB'),\wy(\cdot|x',\szero)\right)\\
	&\geq\frac{|\cB'\setminus\cB|}{|\cB'|}\mathbb{V}\left(\hat{P}_{y^n}(\cdot|\cB'\setminus\cB),\wy(\cdot|x',\szero)\right) \\
	&\qquad\qquad- \frac{|\cB\cap\cB'|}{|\cB'|}\mathbb{V}\left(\hat{P}_{y^n}(\cdot|\cB'\cap\cB),\wy(\cdot|x',\szero)\right)\\
	& \geq  \frac{|\cB\setminus\cB'|}{|\cB'|}\mathbb{V}\left(\hat{P}_{y^n}(\cdot|\cB'\setminus\cB),\wy(\cdot|x',\szero)\right)-2 \frac{|\cB'\cap\cB|}{|\cB'|} .
\end{align}

Property 5) now follows from applying properties 1) to 3) to $\cB$ and $\cB'$.
\end{proof}

\end{document}